\newtheorem{theorem}{Theorem} 
\newtheorem{lemma}[theorem]{Lemma} 
\newtheorem{proposition}[theorem]{Proposition} 
\newtheorem{definition}{Definition} 
\newtheorem{example}{Example} 
\newenvironment{proof}{\paragraph{Proof:}}{\hfill$\square$}
\newenvironment{maboite}{\medskip\noindent\begin{boxedminipage}{\linewidth}}{\end{boxedminipage}\medskip}
\title{\QCTL\ model-checking with  \QBF \ solvers} 
\author{Akash Hossain \\ IRIF, Univ.\ Paris Diderot \and Fran\c{c}ois Laroussinie \\ IRIF, Univ.\ Paris Diderot  }
\begin{document}
\maketitle

\begin{abstract}
Quantified \CTL\ (\QCTL) extends the temporal logic \CTL\ with quantifications over atomic propositions. 
This extension is known to be very expressive: \QCTL\ allows us to express complex properties over Kripke structures (it is  as expressive as \MSO).
Several semantics exist for the quantifications: 
here, we work with  the \emph{structure semantics},  where  the extra propositions label 
the Kripke structure (and not its execution tree), and the model-checking problem is known to be \PSPACE-complete in this framework. 
We propose a new model-checking algorithm for \QCTL\  based on a reduction to \QBF. We consider several reduction strategies and we  compare them with a prototype (based on several \QBF\ solvers) on different examples. 
\end{abstract}



\section{Introduction}

Temporal logics have been introduced in computer science in the late
1970's by Pnueli~\cite{Pnu77}; they~provide a powerful formalism for specifying  correctness properties of
 evolving systems. Various kinds of temporal
logics have been defined, with different expressive power and
algorithmic properties. For~instance, the~\emph{Computation Tree Logic}~(\CTL)
expresses properties of the computation tree of the system under study (time is branching: a state may have several successors), and the 
\emph{Linear-time Temporal
Logic}~(\LTL) expresses properties of one execution at a time (a system is viewed as a set of executions).

Temporal logics allow \emph{model checking}, i.e.\ the automatic verification that a finite state system satisfies its expected behavioural specifications~\cite{QS82a,CE82}. It is well known that \CTL~model-checking is \PTIME-complete and \LTL~model-checking (based on automata techniques) is \PSPACE-complete. Verification tools exist for both logics and model-checking is now commonly used in the design of critical reactive systems. The main limitation to this approach is the state-explosion problem: symbolic techniques (for example with  BDD), SAT-based approaches, or partial order reductions have been developed and they are impressively successful. The SAT-based model-checking consists in using  SAT-solvers in the decision procedures. It was first developed for bounded model-checking (to search for executions whose length is bounded  by some integer,  satisfying some temporal property) which can be reduced to some satisfiability problem and then can be solved by a SAT-solver~\cite{BiereCCZ99}. SAT approaches have also been extended to unbounded verification and combined with other techniques~\cite{McMillan02}. Many studies have been done in this area, and it is widely considered as  an important approach in practice, which complements other symbolic techniques like BDD ones (see~\cite{BiereCCSZ03} for a survey).

In~terms of expressiveness, \CTL\ (or  \LTL) still has some limitations: in~particular,
it~lacks the ability of \emph{counting}. For instance, it~cannot express that
an event occurs (at~least) at every even position along a path, or that a
state has two successors. In~order to cope with this, temporal logics have
been extended with \emph{propositional quantifiers}~\cite{sis83}: those
quantifiers allow for adding fresh atomic propositions in the model before
evaluating the truth value of a temporal-logic formula. That~a state has at
least two successors can then be expressed (in~\emph{quantified \CTL},
hereafter written~\QCTL) by saying that it~is possible to label the model with
atomic proposition~$p$ in such a way that there is a successor that is labelled
with~$p$ and one that is~not. 

Different semantics for \QCTL\ have been studied in the literature depending on the definition of the labelling: either it refers to the finite-state model -- it is the \emph{structure} semantics -- or it refers to the execution tree -- it is the \emph{tree semantics}. 
Both semantics are interesting and have been extensively studied~\cite{Kup95a,Fre01,PBDDC02,Fre03,DLM12,LM14}.
While the tree semantics allow us to use the tree automata techniques to get decision procedures (model-checking and satisfiability are \TOWER-complete~\cite{LM14}), the situation is quite different for the structure semantics: in this framework, model-checking is \PSPACE-complete and satisfiability is undecidable~\cite{Fre01}.

In this paper, we focus  on the structure semantics. We first motivate this choice by showing that \QCTL\ may encode many  logics, for example we explain how to reduce model-checking for \SML\ (Sabotage Modal Logic)~\cite{Benthem05} to the \QCTL\ model-checking problem. Then we propose a model-checking algorithm based on a reduction to \QBF\ (propositional logic augmented with quantifiers): given a Kripke structure $\calK$ and a \QCTL\ formula $\Phi$, we show how to build a \QBF\ formula $\widehat{\Phi}^\calK$ which is valid iff $\calK\sat \Phi$.  It is  natural to use \QBF\  quantifiers  to deal with propositional  quantifiers of \QCTL. Of course, \QBF-solvers are not as efficient as SAT-solvers, but still much progress has been made   and \QBF-solvers have already been considered for model-checking, as in~\cite{DershowitzHK05,qfm}.
Here we propose several reductions depending on the way of dealing with nested temporal modalities, and we compare them with a prototype we implemented (connected to different solvers: \Z3~\cite{MouraB08}, \qfm~\cite{qfm}, \cqesto~\cite{cqesto} and  \qfun~\cite{qfun}). As far as we know, it is the first implementation of a model-checker for \QCTL. 

Here, our first objective is  to use the \QBF-solver as a tool to check complex properties over limited size models, and this is therefore different from the classical use of  \SAT-based techniques which are precisely applied to solve verification problems for very large systems.  

The outline of the paper is as follows: we~begin with setting up the necessary
formalism in order to define \QCTL\ and to discuss its semantics. We~then devote
Section~\ref{sec-reduc} to the different reductions to \QBF. Finally,
Section~\ref{sec-experiments} contains several practical results and examples.


\section{Definitions}

\subsection{Kripke structures}

Let $\AP$ be a finite set of atomic propositions.

\begin{definition}
A~\emph{Kripke structure} is a tuple $\calK=\tuple{V,E,\ell}$, where $V$~is a
finite set of vertices (or states), $E\subseteq V\times V$ is a set of edges
(we assume  that for any~$x\in V$, there exists $x'\in V$ s.t. $(x,x')\in E$),
and $\ell\colon V\to 2^{\AP}$ is a labelling function.
\end{definition}

An infinite  path (also called an execution) in a Kripke structure is an infinite  sequence $\rho = x_0 x_1 x_2 \ldots$  such that for any $i$ we have $x_i \in V$ and $(x_i,x_{i+1})\in E$. We~write  $\Path^\omega_\calK$ for the
set of infinite paths of~$\calK$ and $\Path^\omega_\calK(x)$ for the set of infinite paths issued from $x\in V$.
Given such a path $\rho$, we use $\rho_{\leq i}$ to denote the $i$-th prefix $x_0\ldots x_i$, $\rho_{\geq i}$ for the $i$-th suffix $x_i x_{i+1}\ldots$, and $\rho(i)$ for the vertex $x_i$. The size of $\calK$ is $|V|+|E|$.

 Given a set $P\subseteq \AP$, two  Kripke structures $\calK=(V,E,\ell)$ and $\calK'=(V',E',\ell')$ are said $P$-equivalent (denoted by $\calK \equiv_P \calK'$) if $V=V'$, $E=E'$, and for every $x \in V$ we have: $\ell(x)\cap P = \ell'(x) \cap P$.

\subsection{\QCTL}

This section is devoted to the definition of the logic \QCTL ,\ which extends the classical branching-time temporal logic \CTL\ with  quantifications over atomic propositions.   

\begin{definition}
\label{def-QCTL}
The syntax of $\QCTL$ is defined by the following grammar:
\begin{xalignat*}1
\QCTL \ni \phi ,\psi  &\coloncolonequals  q \mid \neg\phi \mid \phi \ou\psi  
    \mid \Ex \X \phi  \mid \Ex \phi \Until \psi \mid \All \phi \Until \psi  \mid \exists p.\ \phi 
\end{xalignat*}
where $q$ and~$p$ range over~$\AP$.
\end{definition}

\QCTL\  formulas are evaluated over states of  Kripke structures:

\begin{definition}
Let $\calK=\tuple{V,E,\ell}$ be a Kripke structure, and $x \in V$. 
The semantics of \QCTL\ formulas is defined inductively as follows: 
\begin{align*}
\calK, x \models & p  \text{ iff }  p\in \ell(x) \\
\calK,  x \models & \neg\phi  \text{ iff }  \calK,  x \not\models \phi \\
\calK,  x \models & \phi \ou \psi  \text{ iff }  \calK,  x
\models \phi \text{ or }  \calK,  x \models \psi \\
\noalign{\pagebreak[2]}
\calK,  x \models & \Ex \X \phi   \text{ iff }  \exists (x,x') \in E \text{ s.t.\ } \calK, x' \models \phi \\
\calK,  x \models & \Ex \phi  \Until \psi  \text{ iff }  \exists  \rho  \in \Path^\omega_\calK(x),  
  \exists i\geq 0  \text{ s.t.\ }  \calK,  \rho(i) \models \psi \text{ and } \\
   & \qquad \qquad \qquad \qquad \qquad \text{ for any }  0 \leq j <i, \text{ we have }  \calK, \rho(j) \models \phi  \\
\calK,  x \models &\All \phi  \Until \psi  \text{ iff }  \forall  \rho  \in \Path^\omega_\calK(x),  
  \exists i\geq 0  \text{ s.t.\ }  \calK,  \rho(i) \models \psi \text{ and }  \\
  & \qquad \qquad \qquad  \qquad\qquad  \text{ for any }  0 \leq j <i, \text{ we have }  \calK, \rho(j) \models \phi  \\
  \calK,  x   \models & \exists p.\ \phi  \text{ iff } \exists \calK'
\equiv_{\AP\backslash\{p\}} \calK  \text{ s.t. } \calK', x \models \phi 
\end{align*}
\end{definition}

In~the sequel, we use standard abbreviations such as $\top$, $\bot$, $\et$, $\impl$ and $\equivalent$.
We also use the additional    temporal modalities  of \CTL: $\All\X\phi \eqdef \non \Ex \X \non \phi$
, $\Ex\F\phi \eqdef \Ex \top \Until \phi$, $\All\F\phi \eqdef \All \top \Until \phi$, 
$\Ex \G\phi \eqdef \neg\All \F\neg\phi$,  $\All \G\phi \eqdef \neg\Ex \F\neg\phi$,  $\Ex \vfi \WUntil \psi \eqdef \non \All \non \psi \Until (\non \psi \et \non \vfi)$ and $\All \vfi \WUntil \psi \eqdef \non \Ex \non \psi \Until (\non \psi \et \non \vfi)$. 

Moreover, we use the following abbreviations related to quantifiers over atomic  propositions:   $\forall p.\ \phi \eqdef \neg\exists p.\ \neg\phi$,  and for a set $P = \{p_1,\ldots, p_k\} \subseteq \AP$, we write $\exists P. \phi$ for  $\exists p_1.  \ldots \exists p_k. \phi$ and $\forall P. \phi$ for  $\forall p_1. \ldots \forall p_k. \phi$.

 The~\emph{size} of a formula~$\phi\in\QCTL$, denoted $\size\phi$, is defined inductively by: $\size q = 1$, $\size{ \non \phi} = \size{\exists p.\phi} = \size{\Ex\X \phi} =  1 + \size \phi$, $\size{\phi \ou \psi} = \size{\Ex \phi \Until \psi}= \size{\All \phi \Until \psi}= 1 + \size \phi + \size \psi$.  
The \emph{temporal height} of  $\vfi$, denoted $\HT(\vfi)$,   is the maximum number of nested temporal modalities in $\vfi$: $\HT(q)=0$, $\HT(\phi \et \psi)=\HT(\phi \ou \psi)=\max(\HT(\phi),\HT(\psi))$,
 $\HT(\non \phi)=\HT(\exists p.\phi)=\HT(\phi)$, $\HT(\EX \phi)=1+\HT(\phi)$, and $\HT(\Ex \phi \Until \psi)=\HT(\All \phi \Until \psi)= 1+\max(\HT(\phi),\HT(\psi))$. 
 And given a subformula $\psi$ in $\Phi$, the \emph{temporal depth} of $\psi$ in $\Phi$ (denoted $\tdepth_\Phi(\psi)$)  is the number of temporal modalities having $\psi$ in their scope in $\Phi$.

Two \QCTL\ formulas $\vfi$ and $\psi$ are said to be \emph{equivalent} (written $\vfi \equiv \psi$) iff for any structure $\calK$, any state $x$, we have $\calK, x \sat \vfi$ iff   $\calK, x \sat \psi$. This equivalence is substitutive~\footnote{If $\phi\equiv\psi$, replacing the subformula $\phi$ by $\psi$ in a formula $\Phi$ does not change the truth value of $\Phi$.}.

A formula $\Phi$ is said to be in \emph{Negation Normal Form} (NNF) when the negations are only applied to atomic propositions in $\Phi$: any \QCTL \ formula is equivalent to some formula in NNF built from operators in $\{\et,\ou,\exists,\forall, \EX, \AX, \Ex\_\Until,$  $\All\_\Until, \Ex\_\WUntil, \All\_\WUntil\}$ and literals $p$ and $\non p$ with $p\in \AP$. 

\subsection{Discussion on the semantics.}

The semantics we defined is classically called the \emph{structure semantics} (or \emph{Kripke semantics} in~\cite{Fre01}): a formula $\exists p.\phi$ holds true in a Kripke structure $\calK$ iff there exists a $p$-labelling of the structure $\calK$ such that $\phi$ is satisfied. 
 Another  well-known semantics coexists in the literature for propositional
quantifiers,  the  \emph{tree semantics}: $\exists p.\ \phi$ holds
 true when there exists a labelling by $p$ of the \emph{execution tree} (the infinite unfolding) of the
 Kripke structure under which $\phi$~holds. If, for \CTL, interpreting formulas over the structure or the execution tree is equivalent, this is not the case for \QCTL. For example, $\forall p. (p \impl \EX p)$ specifies the existence of a self-loop in the current state when interpreted in the structure semantics, and it is never true in the tree semantics. 
Finally note that there is also 
 the \emph{amorphous semantics}~\cite{Fre01}, where
$\exists p.\ \phi$ holds true at a state~$x$ in some Kripke structure~$\calK$
if, and only if, there exists some Kripke structure~$\calK'$ with a state $x'$
such that $x$ and $x'$ are bisimilar, and for which there exists a $p$-labelling
making $\phi$ hold true at~$x'$. With this last semantics, the logic is insensitive
to unwinding, and more generally it is bisimulation-invariant (contrary to the two previous semantics). 
 Now we compare the tree and the structure semantics. 
 
 \paragraph{Complexity} First  note that    these two semantics do not have the same algorithmic  properties: if \QCTL\ model-checking and satisfiability are \TOWER-complete for the tree semantics (the algorithms are based on tree automata techniques), \QCTL\ model-checking  is \PSPACE-complete for the structure semantics but satisfiability is undecidable (see~\cite{LM14} for a survey). 
   
\paragraph{Expressive power} 
In both semantics, \QCTL\ is as expressive\footnote{This would require adequate definitions,
since a temporal logic formula may only deal with the reachable part of the
model, while \MSO\ has a more \emph{global} point of view.} as the Monadic Second-Order Logic over the finite structures or the infinite trees (depending on the semantics) and as \QCTLs\ (the extension of the standard $\CTLs$ with state formulas $\exists p.\phi$). 
Note also that any \QCTL\ formula is equivalent to a formula in prenex normal form (we will use this result in next sections). All these results are presented in~\cite{LM14}.

\subsection{Motivations and examples for \QCTL\ in the structure semantics}

First we present several  examples of formulas to illustrate the expressive power of \QCTL. Then we will consider a more complex problem showing how  \QCTL\ can be useful to encode complex specifications written in other logics. 
 
\subsubsection{Examples of \QCTL\ formulas}

\QCTL\ allows us to express complex properties over Kripke structures: for example, we can build a characteristic formula (up to isomorphism) of a structure or  reduce model-checking problems for multi-player games to \QCTL\ model-checking~\cite{LaroussinieM15}. Below, we give several examples of counting properties, to illustrate the expressive power of propositional quantifiers.

The first   formula below expresses that there exists a unique reachable state satisfying $\vfi$, and the second one states that there exists a unique immediate successor satisfying $\vfi$:
\begin{align}
\Ex_{=1} \F\vfi &\;\eqdef\; \EF \vfi \et \forall p.\ \big(\EF (p\et \vfi) \impl \AG (\vfi \impl p)\big)  \label{FuniqF} \\
\Ex_{=1}\X\phi  &\;\eqdef\; \Ex\X\phi \et \forall p.\ \big(\Ex\X(\phi \et p) \impl \All\X(\phi \impl  p)\big) \label{FuniqX}
\end{align}
where we assume that $p$ does not appear in~$\vfi$. Consider the formula~(\ref{FuniqF}):  if there were two reachable states satisfying $\vfi$, then labelling only one of them with~$p$ would falsify the $\AG$ subformula. For (\ref{FuniqX}), the argument is similar. 

The existence of 
at least $k$ successors satisfying a given property can be expressed with:
\begin{align}
\Ex_{\geq k}\X\phi = & \exists p_1 \ldots \exists p_k.\ 
\Bigl(\ET_{1\leq i\leq k} \Ex\X \bigl(p_i\et \ET_{i'\not= i} \non p_{i'}\bigr) \et
\All\X\Bigl(\bigl(\OU_{1\leq i\leq k} p_i\bigr) \impl \phi\Bigr)\Bigr) 
\end{align}

And we can define $\Ex_{=k} \X \phi$ as $\Ex_{\geq k} \X \phi  \:\et\: \non \Ex_{\geq k+1} \X \phi$. Note that
these examples show why \QCTL\ formulas are not  bisimulation-invariant. 

When using \QCTL\ to specify properties, one often needs to quantify (existentially or universally) over \emph{one} reachable state we want to mark with a given atomic proposition. To this aim, we add the following abbreviations: 
\[
\exists^1p.\vfi \eqdef  \exists p. \big( (\Ex_{=1}\F \: p) \et \vfi\big) \quad\quad\quad
\forall^1p.\vfi \eqdef  \forall p. \big( (\Ex_{=1}\F \:  p) \impl \vfi\big)
\]

\subsubsection{From Sabotage Modal Logic to \QCTL}

\QCTL\ can be used to encode (quite easily) many problems for other temporal or modal logics. For example, in~\cite{LMS15}, \QCTL\ is used to decide problems for multi-agent systems (the tree and structure semantics are both used according to the type of strategies allowed in the system). Here we consider another example to motivate the use of \QCTL\ with the structure semantics: the model-checking problem for the Sabotage Modal Logic (\SML). 

\SML\ is a modal logic containing modalities which may \emph{delete} transitions in the model~\cite{Benthem05,AFH15,LR03,LR03b,ABG18}. \SML\ formulas are built from Boolean connectives, atomic propositions and the following modalities: $\Wdiam$ (and its dual $\Wbox$) and $\Bdiam$ (and its dual $\Bbox$). 
$\Wdiam$ is equivalent to the \CTL\ modality $\EX$ and $\Bdiam$ is a edge removal operator.
Given   a  Kripke structure $\calK=\tuple{V,E,\ell}$   and a state $x\in V$, the semantics of these modalities is as follows:
\begin{align*}
\tuple{V,E,\ell}, x \sat \Wdiam \vfi  & \quad \mbox{iff} \quad \exists (x,x') \in E \; \mbox{s.t.}\: \tuple{V,E,\ell}, x' \sat  \vfi \\
\tuple{V,E,\ell}, x \sat \Bdiam \vfi  & \quad \mbox{iff} \quad \exists (y,y') \in E \; \mbox{s.t.}\: \tuple{V,E\setminus \{(y,y')\},\ell}, x \sat  \vfi \\
\end{align*}
The expressive power of 
\SML\ is  interesting, one can express properties over the \emph{frame} of the underlying Kripke structure.  For example, the formula 
$\Wdiam\top \et \Wbox\Wdiam \top \et \Bbox \Wbox \bot$ 
holds true in   $\calK, x$ iff  the structure $\calK$ is restricted to a single selfloop from $x$. Other examples to characterize structures (\eg\: cycles of length $n$) are given in~\cite{ABG18}.
We know that \SML\ satisfiability is undecidable and \SML\ model-checking is \PSPACE-complete\cite{AFH15,LR03,LR03b}.
  
A local variant is also used in literature where the removed transition has to be issued from the current state. This is done with the modality $\Bdiaml$ defined by:
\[
\tuple{V,E,\ell}, x \sat \Bdiaml \vfi   \quad \mbox{iff} \quad \exists (x,x') \in E \; \mbox{s.t.}\: \tuple{V,E\setminus \{(x,x')\},\ell}, x \sat  \vfi 
\]

We can easily reduce a model-checking instance for \SML\ (including the local modality):  $\calK, x \sat \Phi$ to a model-checking instance for \QCTL\ $\calK', x' \sat \Phi'$. 
The main idea of the reduction consists in identifying $\Bdiam$-removed edges  by labelling  intermediary states along them  with some fresh atomic propositions $\mathsf{del}_i$. If $k$ is the $\{\Bdiam,\Bdiaml\}$-height of $\Phi$~\footnote{\ie\ the maximal number of nested  $\{\Bdiam,\Bdiaml\}$-modalities in $\Phi$.},    we will use the  propositions $\mathsf{del}_0$,\ldots, $\mathsf{del}_{k-1}$ in the formula $\Phi'$. 
Moreover,  we can see that any modality $\Bdiam$ allows us to remove any edge in $\calK$ (with no relationship with the current state where the formula is interpreted). To encode this, we need to complete the Kripke structure with additional edges connecting any pair of states (including selfloops). Formally $\calK'$ is then defined by:
\begin{itemize}
\item $V' \eqdef V \cup \{v_{xy} \:|\: (x,y) \in E\}$,
\item $E'\eqdef \{ (x,v_{xy}),(v_{xy},y) \:|\: (x,y)\in E\} \cup \{(x,y) \:|\: x,y \in V\}$
\item $\ell'(x) = \ell(x)$ for $x \in V$ and $\ell(v_{xy})=\{\mathsf{inter}\}$ for every $v_{xy}$ in $V'$. 
\end{itemize}
where $\mathsf{inter}$ is a fresh atomic proposition used to mark intermediary states along initial $\calK$'s edges.

Formally we define $\widetilde{\:\phi\:}^{\:n}$ where $n$ is the $\{\Bdiam,\Bdiaml\}$-depth of subformula $\phi$ in the main formula $\Phi$ (\ie\ $\phi$ occurs in the scope of $n$ nested modalities $\{\Bdiam,\Bdiaml\}$ in $\Phi$). The definition is given in Table~\ref{tab-sml} and the correctness of the reduction is stated as follows:
\begin{table}
\begin{align*}
\widetilde{\:p\:}^{\:n} & \eqdef p \quad \quad 
{\widetilde{\vfi \et \psi}}^{n}  \eqdef  {\widetilde{\:\vfi\:}}^{\:n} \et {\widetilde{\:\psi\:}}^{\:n}  \quad \quad
{\widetilde{\non  \phi}}^{\:n} \eqdef   \non \: {\widetilde{\:\phi\:}}^{\:n}   \quad \quad \widetilde{\:\top\:}^{\:n} \eqdef \top \\
{\widetilde{\Wdiam  \vfi}}^{\:n} & \eqdef   \EX (\mathsf{inter} \et \ET_{0 \leq i <n} \non \mathsf{del}_i \et  \EX \: {\widetilde{\:\vfi\:}}^{\:n})  \\
{\widetilde{\Bdiam  \vfi}}^{\:n} & \eqdef   \exists^1 \mathsf{del}_{n} . \Big( \EX \: \EX (\mathsf{inter} \et \ET_{0 \leq i <n} \!\!\non \mathsf{del}_i \et \mathsf{del}_n) \et  {\widetilde{\:\vfi\:}}^{\:n+1}\Big)  \\
{\widetilde{\Bdiaml  \vfi}}^{\:n} & \eqdef   \exists^1 \mathsf{del}_{n} . \Big(\EX (\mathsf{inter} \et \ET_{0 \leq i <n} \!\! \non \mathsf{del}_i \et \mathsf{del}_n) \et  {\widetilde{\:\vfi\:}}^{\:n+1}  \Big)
\end{align*}
\caption{Transformation rules from \SML\ to \QCTL}
\label{tab-sml}
\end{table}
\begin{proposition}
Let $\calK$ be a Kripke structure $\tuple{V,E,\ell}$, $x \in V$,  and $\Phi\in \SML$. Given a $\Phi$-subformula $\psi$ such that $\psi$ occurs at   $\{\Bdiam,\Bdiaml\}$-depth $n$ in $\Phi$, and given $E_d \subseteq E$ with 
$E_d=\left\lbrace (x_0, y_0), \ldots (x_{n-1}, y_{n-1})\right\rbrace$, we have:
\[
\tuple{V,E\setminus E_d,\ell},x \sat \psi \quad\quad \mbox{iff}\quad \quad \tuple{V',E',\ell''}, x \sat \widetilde{\:\psi\:}^{\:n}
\]
where $\calK'=\tuple{V',E',\ell'}$ is defined as above, and $\ell''(v)$ for $v \in V'$ is defined as follows:
if $v=v_{x_iy_i}$ for some $i\in \{0,\ldots,n-1\}$, then $l''(v)=\{\mathsf{inter},\mathsf{del}_i\}$, else $l''(v)=l'(v)$.
\end{proposition}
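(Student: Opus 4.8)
The plan is to proceed by structural induction on the subformula $\psi$, with the statement quantified over all $n$, all sets $E_d$ of exactly $n$ removed edges, and all labellings $\ell''$ of the form prescribed. The key point that makes the induction go through is that $\widetilde{\:\psi\:}^{\:n}$ carries $n$ as a parameter, and the invariant ties $n$ to $|E_d|$ and to which intermediary states $v_{x_i y_i}$ carry which $\mathsf{del}_i$. So the induction hypothesis I would state is precisely the biconditional in the proposition, for all admissible $(n, E_d, \ell'')$ simultaneously.

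\textbf{Base case and Boolean cases.} For $\psi = p$ (or $\top$), both sides reduce to $p \in \ell(x)$ (resp.\ trivially true), since $x \in V$ and $\ell''(x) = \ell'(x) = \ell(x)$ for genuine states; note a removed edge only relabels intermediary vertices $v_{x_i y_i}$, never original states, so the atomic labelling at $x$ is untouched. The cases $\non\phi$ and $\phi \et \psi$ are immediate from the IH, since $\widetilde{\cdot}^{\,n}$ commutes with $\non$ and $\et$ and the depth $n$ is the same for the immediate subformulas.

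\textbf{The modal cases are the substance.} For $\Wdiam\vfi$: I need to show that in $\tuple{V, E\setminus E_d, \ell}$ the state $x$ has a successor satisfying $\vfi$ iff in $\calK'$ (with labelling $\ell''$) the formula $\EX(\mathsf{inter} \et \ET_{0\le i<n}\non\mathsf{del}_i \et \EX\,\widetilde{\:\vfi\:}^{\,n})$ holds at $x$. The forward direction: an edge $(x,y) \in E \setminus E_d$ gives the two-step path $x \to v_{xy} \to y$ in $E'$; since $(x,y) \notin E_d$, the vertex $v_{xy}$ is not any $v_{x_i y_i}$, hence $\ell''(v_{xy}) = \{\mathsf{inter}\}$, so $v_{xy} \models \mathsf{inter} \et \ET_i \non\mathsf{del}_i$, and then the IH at the \emph{same} depth $n$ (with the \emph{same} $E_d$) finishes it at $y$. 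The backward direction requires care: a successor $w$ of $x$ in $E'$ satisfying $\mathsf{inter}$ must be of the form $v_{xy}$ (the only $\mathsf{inter}$-labelled vertices), and the conjunct $\ET_i \non\mathsf{del}_i$ forces $v_{xy} \notin \{v_{x_0 y_0}, \dots, v_{x_{n-1} y_{n-1}}\}$, i.e.\ $(x,y) \notin E_d$; the extra all-pairs edges added to $E'$ are harmless here because they go directly between original states and cannot be confused with $\mathsf{inter}$-vertices. For $\Bdiam\vfi$: the $\exists^1\mathsf{del}_n$ quantifier picks out exactly one new state to mark with $\mathsf{del}_n$ (using the earlier abbreviation $\exists^1 p.\phi \eqdef \exists p.((\Ex_{=1}\F\, p) \et \phi)$, which I may invoke); the subformula $\EX\,\EX(\mathsf{inter} \et \ET_{0\le i<n}\non\mathsf{del}_i \et \mathsf{del}_n)$ forces that marked state to be reachable in two steps and to be an $\mathsf{inter}$-vertex $v_{yy'}$ distinct from the previously removed ones — here I use the all-pairs edges of $E'$ to reach $v_{yy'}$ from $x$ regardless of $(y,y')$'s relation to $x$, which is exactly why those edges were added. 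Marking $v_{yy'}$ with $\mathsf{del}_n$ corresponds to enlarging $E_d$ to $E_d \cup \{(y,y')\}$ (now of size $n+1$), and $\widetilde{\:\vfi\:}^{\,n+1}$ is evaluated under the correspondingly updated $\ell''$, so the IH at depth $n+1$ applies. The case $\Bdiaml\vfi$ is identical except the marked $\mathsf{inter}$-vertex must be reachable in \emph{one} step ($\EX$ rather than $\EX\,\EX$), forcing the removed edge to be issued from $x$.

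\textbf{The main obstacle} I anticipate is bookkeeping the correspondence between the set $E_d$ (on the semantic side) and the $\mathsf{del}_i$-labelling (on the syntactic side) with enough precision that the $\exists^1$ in the $\Bdiam/\Bdiaml$ cases is genuinely an honest bijection: I must check that $\exists^1\mathsf{del}_n$ cannot ``cheat'' by marking an original state or a wrong intermediary vertex — this is ruled out by the $\mathsf{inter}$ conjunct and the two-step (resp.\ one-step) reachability conjunct — and that it cannot re-mark one of the already-removed $v_{x_i y_i}$ — ruled out by $\ET_{0\le i<n}\non\mathsf{del}_i$. A secondary subtlety is making sure the all-pairs edges in $E'$ never create spurious witnesses for $\Wdiam$: since every conjunct in $\widetilde{\Wdiam\vfi}^{\,n}$ begins by demanding $\mathsf{inter}$ at the successor, and all-pairs edges land only on original ($\mathsf{inter}$-free) states, they are inert. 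Once these invariants are pinned down, each modal case is a short unfolding of semantics plus one application of the IH.
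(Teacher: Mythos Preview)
Your proposal is correct and follows the same structural-induction approach as the paper's own proof; the paper only sketches the $\Wdiam$ and $\Bdiam$ cases and omits exactly the bookkeeping you spell out (the backward directions, why the all-pairs edges are inert for $\Wdiam$, and why $\exists^1\mathsf{del}_n$ cannot mark a wrong vertex). Your treatment is strictly more detailed but not different in spirit.
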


\begin{proof} 
The proof is done by structural induction on $\psi$. We only consider the modalities $\Wdiam$ and $\Bdiam$:
\begin{itemize}
\item $\psi = \Wdiam \psi_1$: If $\tuple{V,E\setminus E_d,\ell},x \sat \psi$, there exists $(x,y)\in E\setminus E_d$ such that 
$\tuple{V,E\setminus E_d,\ell},y \sat \psi_1$. By i.h., we get $\tuple{V',E',\ell''}, y \sat \widetilde{\psi_1}^{\:n}$, from which we deduce $\tuple{V',E',\ell''}, x \sat \widetilde{\:\psi\:}^{\:n}$ because the intermediary state $v_{xy}$ is labelled by $\ell'(v_{xy})=\{\mathsf{inter}\}$ (\ie\ no  $\mathsf{del}_i$ is true at $v_{xy}$). 
The other direction proceeds in the same way.
\item $\psi = \Bdiam \psi_1$: If $\tuple{V,E\setminus E_d,\ell},x \sat \psi$, there exists $(y,y')\in E\setminus E_d$ such that 
$\tuple{V,E\setminus (E_d\cup\{(y,y')\}),\ell},x \sat \psi_1$. By i.h., we get $\tuple{V',E',\ell''}, x \sat \widetilde{\psi_1}^{\:n+1}$ with a labelling $\ell''$ as described in the proposition, in particular we have $\ell''(v_{yy'})=\{\mathsf{inter},\mathsf{del}_n\}$. We can deduce that we have $\tuple{V',E',\ell'''}, x \sat \widetilde{\:\psi\:}^{\:n}$ where $\ell'''$ coincides with $\ell''$ except for $v_{yy'}$ that is labelled only by $\mathsf{inter}$. The other direction is similar.
\end{itemize}
\hfill$\Box$
\end{proof}
  
In particular, we have $\calK,x\sat \Phi$ iff $\calK',x\sat \widetilde{\:\Phi\:}^0$ which provides the reduction.  
  
\begin{example}
To illustrate the construction of $\calK'$, consider the structure $\calK$ in Figure \ref{fig-ex-sml} and its corresponding $\calK'$. And  $\calK,x_1 \sat \Wdiam \Wdiam \Bbox \Wdiam \top$  is reduced to:
$\calK',x_1 \sat \EX \Big(\mathsf{inter} \et \EX (\EX (\mathsf{inter} \et \EX (\non \exists^1 \mathsf{del}_0. \EX \EX (\mathsf{inter} \et \mathsf{del}_0) \et \non \EX (\mathsf{inter} \et \non\mathsf{del}_0 \et \EX \top)))) \Big)$. Note that the formula holds true because $x_3$ has 2 successors. 
  
\begin{figure}[h]
\centering
\begin{tikzpicture}
\draw (0,3) node (K1) {$\calK$} ;
\draw (0,2) node[draw,circle,inner sep=0.1mm,vert] (x1) {$x_1$} ;
\draw (2,2) node[draw,circle,inner sep=0.1mm,vert] (x3) {$x_3$} ;
\draw (1,3) node[draw,circle,inner sep=0.1mm,vert] (x2) {$x_2$} ;
\draw (2,0) node[draw,circle,inner sep=0.1mm,vert] (x4) {$x_4$} ;
\draw (0,0) node[draw,circle,inner sep=0.1mm,vert] (x5) {$x_5$} ;

\draw[-latex'](x1) -- (x2) ;
\draw[-latex'](x2) -- (x3) ;
\draw[-latex'](x3) -- (x1) ;
\draw[-latex'](x5) -- (x4) ;
\draw[out=-125,in=125,-latex'] (x3) to  (x4);
\draw[out=55,in=-55,-latex'] (x4) to (x3);

\begin{scope}[shift={(5,0)}]
\draw (-0.5,3) node (K2) {$\calK'$} ;
\draw (0,2) node[draw,circle,inner sep=0.1mm,vert] (x1) {$x_1$} ;
\draw (1,2) node[draw,circle,inner sep=0.1mm,vert] (x31) {$\;$} ;
\draw (2,2) node[draw,circle,inner sep=0.1mm,vert] (x3) {$x_3$} ;
\draw (0.5,2.5) node[draw,circle,inner sep=0.1mm,vert] (x12) {$\;$} ;
\draw (1.5,2.5) node[draw,circle,inner sep=0.1mm,vert] (x23) {$\;$} ;

\draw (1,3) node[draw,circle,inner sep=0.1mm,vert] (x2) {$x_2$} ;
\draw (1,0) node[draw,circle,inner sep=0.1mm,vert] (x54) {$\;$} ;
\draw (1.7,1) node[draw,circle,inner sep=0.1mm,vert] (x34) {$\;$} ;
\draw (2.3,1) node[draw,circle,inner sep=0.1mm,vert] (x43) {$\;$} ;

\draw (2,0) node[draw,circle,inner sep=0.1mm,vert] (x4) {$x_4$} ;
\draw (0,0) node[draw,circle,inner sep=0.1mm,vert] (x5) {$x_5$} ;

\draw[-latex'](x1) -- (x12) ;
\draw[-latex'](x12) -- (x2) ;
\draw[-latex'](x2) -- (x23) ;
\draw[-latex'](x23) -- (x3) ;
\draw[-latex'](x3) -- (x31) ;
\draw[-latex'](x31) -- (x1) ;
\draw[-latex'](x5) -- (x54) ;
\draw[-latex'](x54) -- (x4) ;
\draw[-latex'](x3) -- (x34) ;
\draw[-latex'](x34) -- (x4) ;
\draw[-latex'](x4) -- (x43) ;
\draw[-latex'](x43) -- (x3) ;

\draw[latex'-latex',dashed, line width=0.5](x1) -- (x4) ;
\draw[latex'-latex',dashed, line width=0.5](x1) -- (x5) ;
\draw[latex'-latex',dashed, line width=0.5](x3) -- (x5) ;
\draw[out=150,in=180,latex'-latex',dashed,line width=0.5,looseness=1.5] (x5) to  (x2);
\draw[out=30,in=0,latex'-latex',dashed,line width=0.5,looseness=1.5] (x4) to  (x2);

\draw[out=70,in=-160,latex'-latex',dashed,line width=0.5] (x1) to  (x2);
\draw[out=110,in=-20,latex'-latex',dashed,line width=0.5] (x3) to  (x2);
\draw[out=-25,in=-155,latex'-latex',dashed,line width=0.5] (x1) to  (x3);
\draw[out=-45,in=45,latex'-latex',dashed,line width=0.5] (x3) to  (x4);
\draw[out=-25,in=-155,latex'-latex',dashed,line width=0.5] (x5) to  (x4);
\draw[loop left,-latex',dashed,line width=0.5] (x1) to (x1);
\draw[loop left,-latex',dashed,line width=0.5] (x5) to (x5);
\draw[loop above,-latex',dashed,line width=0.5] (x2) to (x2);
\draw[loop right,-latex',dashed,line width=0.5] (x3) to (x3);
\draw[loop right,-latex',dashed,line width=0.5] (x4) to (x4);
\draw[loop left,-latex',dashed,line width=0.5] (x5) to (x5);

\end{scope}

\end{tikzpicture}
\caption{Example of reduction for \SML\ model-checking.}
\label{fig-ex-sml}
\end{figure}
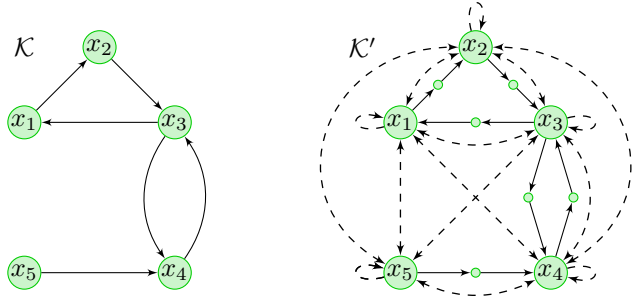
  \end{example}


\section{Model-checking QCTL}
\label{sec-reduc}

Model-checking \QCTL\ is a \PSPACE-complete problem, and it is \NP-complete for the  restricted set of formulas of the form $\exists P.\vfi$, with $P\subseteq \AP$ and $\vfi \in \CTL$ \cite{Kup95a}.
In this section, we give a reduction from the \QCTL\ model-checking problem to the \QBF \ validity problem.

In the following, we assume a Kripke structure $\calK=\tuple{V,E,\ell}$ with $V=\{x_0,\ldots,x_n\}$, an initial state $x_0 \in V$ and a \QCTL\ formula $\Phi$ to be fixed.  
We also assume w.l.o.g.\  that every quantifier $\exists$ and $\forall$ in $\Phi$ introduces a fresh atomic proposition, and distinct from the propositions used in $\calK$. We use $\AP_Q^\Phi$ to denote the set of  quantified atomic propositions in $\Phi$. 

These  assumptions allow us to use an alternative notation for the semantics of $\Phi$-subformulas: the truth value of $\vfi$ will be defined for a state $x$ in  $\calK$ within an \emph{environment} $\varepsilon:\AP_Q^\Phi\rightarrow 2^V$, that is a partial mapping  associating a subset of vertices to a proposition in $\AP_Q^\Phi$. We use $\calK,x \sat_\varepsilon \vfi$ to denote that $\vfi$ holds at $x$ in $\calK$ within $\varepsilon$.
Therefore  the $\calK$'s labelling $\ell$ is not modified when a subformula is evaluated, only $\varepsilon$ is extended with labellings for  new quantified propositions. 
Formally the main changes of the semantics are as follows:
\begin{align*}
\calK, x \models_\varepsilon p \text{ iff } & \Big((p \in \AP_Q^\Phi \text{ and } x \in \varepsilon(p)) \text{ or } (p \not\in \AP_Q^\Phi \text{ and } p \in \ell(x))\Big) \\
\calK,  x \models_\varepsilon \exists p.\ \phi \text{ iff } & \exists V' \subseteq V  \text{ s.t. } \calK, x \models_{\varepsilon[p \mapsto V']} \phi 
\end{align*}
where $\varepsilon[p \mapsto V']$ denotes the mapping    which coincides with $\varepsilon$ for every proposition in $AP_Q^\Phi\setminus \{p\}$ and associates $V'$ to $p$. 

We use this new notation in order to better distinguish  initial $\calK$'s propositions and quantified propositions to make proofs simpler. Of course, there is no semantic difference: $\calK,x  \models \Phi$ iff $\calK, x \models_\emptyset \Phi$. 

\medskip

In next sections, we consider general \emph{quantified propositional formulas} (\QBF) of the form:
\[
\QBF \ni \alpha, \beta \coloncolonequals  q \mid   \alpha \ou \beta \mid  \non \alpha \mid    \exists q.\alpha \mid  
\]
We will also use the following classical abbreviations: 
$\alpha \et \beta = \non (\non \alpha \ou \non \beta)$, $\alpha \impl \beta = \non \alpha \ou \beta$, 
$\alpha \equivaut \beta = (\alpha \impl \beta) \et (\beta \impl \alpha)$, and $\forall q.\alpha = \non\exists q. \non \alpha$. 
The formal semantics of  a formula $\alpha$ is defined over a Boolean valuation for free variables in $\alpha$ (\ie\ propositions which are not bound by a quantifier~\footnote{We assume w.l.o.g.\ that every quantifier $\exists$ or $\forall$ introduces a new proposition.}), and it is defined as usual. A  formula  is said to be closed when it does not contain free variables. In the following, we use the standard notion of validity for closed \QBF\ formulas.

Our aim is then to build a (closed) \QBF\ formula $\widehat{\Phi}^{x_0}$ such that   $\widehat{\Phi}^{x_0}$  is valid iff $\Phi$ holds true at $x_0$ in $\calK$. 

\subsection{Overview}

We present several reductions from the \QCTL\ model-checking problem to the \QBF\ validity problem. 
These reductions are defined as two steps processes: first a pre-processing is carried out  on  the original \QCTL\ formula and then a syntactic translation into \QBF\ is applied. 
All these reductions differ only from the pre-processing step, indeed they  share the   final translation denoted  $\widehat{\vfi}^{x,P}$ for a  \QCTL\  formula $\Phi$, a vertex $x$ and a subset  $P \subseteq AP_Q^\Phi$, and defined in Table~\ref{reduc-main} (its correctness will be established in Theorem~\ref{theo-UU}). 

Of course, the construction of  the \QBF \ formula $\widehat{\vfi}^{x,P}$ 
uses the structure $\calK$ (the transition relation $E$ and the labelling $\ell$). The vertex $x$ is the state where the \QCTL\ formula $\phi$ has to be interpreted. Every quantification $\exists p$ in $\Phi$ is replaced by a sequence of quantifications $\exists p^{x_0} \ldots \exists p^{x_n}$ in the \QBF\ formula in order to encode the $p$-labelling of $\calK$ (\ie\ a truth value for every state of the structure): the variable $p^{x_i}$ is assumed to be true iff $x_i$ is labelled by $p$. The set $P$ in  $\widehat{\vfi}^{x,P}$ is the set of quantified propositions: when evaluating a proposition $p$ at $x$, we need to know whether $p$ belongs to the set of initial atomic propositions of $\calK$ (and its truth value depends on $\ell(x)$), or $p$ is a quantified proposition introduced by some quantifier $\exists$ or $\forall$ in $\Phi$ (and its truth value is the variable $p^x$). The encoding of the temporal modalities are explained below.

\subsubsection{Unfolding characterization of the until operators}

In Table~\ref{reduc-main}, the temporal modalities are encoded in \QBF \ by unfolding of the transition relation $E$.
For example,  $\EX \phi$ holds true at $x$ iff there exists some $(x,x')\in E$ such that $x' \sat \phi$, this is precisely the meaning of the corresponding rule in the table with the disjunction over the $x'$s. And the truth value of $\EF \phi$ at $x$ is encoded as a disjunction of the truth values of $\phi$ at any state $x'$  reachable from $x$ with an arbitrary number of transitions in $E$ ($E^*$ denotes the reflexive and transitive closure of $E$). 
The rules for $\AX$ and $\AG$ are similar. 
Finally the rules for $\EU$ and $\AU$ can be seen as a depth-first way to look for a  path (or a set of paths for $\AU$) satisfying the Until modality. Other temporal modalities (\eg\ $\EG$, $\AF$, $\EW$, or $\AW$) can be translated into \QBF \ by using their definitions in terms of $\EU$ and $\AU$ or by using adhoc rules as for $\AG$~\footnote{The presence of dedicated rules for $\EF$, $\AG$ and $\AX$ is due to the fact that other reductions eliminate all temporal modalities except $\EX$, $\AX$, $\EF$ and $\AG$.  Of course the formulas provided by these rules are equivalent -- modulo Boolean simplifications -- to the formulas we could obtain by using the standard definitions $\EF = \Ex \top \Until \_$, $\AG = \non \EF \non$ and $\AX = \non \EX \non$.}.

\begin{table}[ht]
\begin{align*}
\widehat{\non \vfi}^{\:x,P} &\eqdef \non \widehat{\vfi}^{\:x,P} \quad \quad 
\widehat{\vfi \ou \psi}^{\:x,P} \eqdef  \widehat{\vfi}^{\:x,P} \ou \widehat{\psi}^{\:x,P} \quad \quad 
\widehat{\vfi \et \psi}^{\:x,P} \eqdef  \widehat{\vfi}^{\:x,P} \et \widehat{\psi}^{\:x,P}  \\
\widehat{\exists p. \vfi}^{\:x,P} &\eqdef \exists p^{x_0} \ldots p^{x_n}.  \widehat{\vfi}^{\:x,P\cup\{p\}} \quad\quad\quad
\widehat{p}^{\:x,P} \eqdef \begin{cases} p^x & \text{if}\: p \in P \\ \top & \text{if}\:  p \not\in P \text{ and } p \in \ell(x) \\ \bot & \text{otherwise} \\ \end{cases} \\ 
\widehat{\EX  \vfi}^{\:x,P} & \eqdef  \!\!\!\! \OU_{(x,x') \in E} \!\!\!\!\widehat{\vfi}^{\:x',P} \quad \quad \quad
\widehat{\AX  \vfi}^{\:x,P} \eqdef  \!\! \!\!\ET_{(x,x') \in E} \!\!\!\!\widehat{\vfi}^{\:x',P}  \\
\widehat{\EF  \vfi}^{\:x,P} & \eqdef \!\!\!\!  \OU_{(x,x') \in E^*}\!\!\!\! \widehat{\vfi}^{\:x',P} \quad \quad  \quad
\widehat{\AG  \vfi}^{\:x,P} \eqdef \!\!\!\!  \ET_{(x,x') \in E^*}\!\!\!\! \widehat{\vfi}^{\:x',P} \\
\widehat{\Ex \vfi \Until \psi}^{\:x,P} &\eqdef   \overline{\Ex \vfi \Until \psi}^{x,P,\{x\}} \quad \quad \text{with:} \\
 & \overline{\Ex \vfi \Until \psi}^{\:x,P,X} \eqdef \; \widehat{\psi}^{\:x,P} \ou \Big(\widehat{\vfi}^{\:x,P} \et \OU_{\stackrel{(x,x')\in E}{\scriptscriptstyle \text{s.t.}\: x'\not\in X}} \overline{\Ex \vfi \Until \psi}^{x',P,X\cup\{x'\}} \Big)  \\
\widehat{\All \vfi \Until \psi}^{\:x,P} &\eqdef   \overline{\All \vfi \Until \psi}^{\:x,P,\{x\}} \quad \quad \text{with:} \\
 & \overline{\All \vfi \Until \psi}^{\:x,P,X} \eqdef \; \begin{cases} \widehat{\psi}^{\:x,P}  \quad \quad \quad \text{if}\: \exists (x,x') \in E \text{ s.t. } x'\in X & \\ 
 {\displaystyle \widehat{\psi}^{\:x,P}  \ou \Big(\widehat{\vfi}^{\:x,P} \et \!\!\!\!\ET_{(x,x')\in E}  \!\!\!\!\! \overline{\All \vfi \Until \psi}^{x',P,X\cup\{x'\}} \Big)} & \!\!\!\!\text{otherwise} \\ \end{cases}
\end{align*}
\caption{Translation from \QCTL\ to \QBF}
\label{reduc-main}
\end{table}

Applied in a top down manner, the equivalences in Table~\ref{reduc-main} define a translation from  \QCTL\ to \QBF. Before stating the correctness of this  translation, we need to associate a Boolean valuation $v_\varepsilon$ for variables in $\AP_Q^\Phi\times V$ to an environment $\varepsilon$ for $\AP_Q^\Phi$.  We define $v_\varepsilon$ as follows: for any $p\in \AP_Q^\Phi$ and $x \in V$,  $v_\varepsilon(p^x) = \top$ iff $x \in \epsilon(p)$.   
Now we have the following theorem which establishes the correctness of the translation $\widehat{\phi}^{\:x,P}$ (its proof is in~\ref{app-uu}):
\begin{theorem}
\label{theo-UU}
Given a \QCTL\ formula $\Phi$, a Kripke structure    $\calK=\tuple{V,E,\ell}$, a state $x \in V$, an environment $\varepsilon : \AP_Q^\Phi \mapsto 2^V$ and a  $\Phi$-subformula $\vfi$, if  $\widehat{\vfi}^{x,\dom(\varepsilon)}$ is defined inductively w.r.t.\ the rules of Table~\ref{reduc-main}, we have:
$\calK, x \sat_\varepsilon \vfi \quad\mbox{iff}\quad v_\varepsilon \sat \widehat{\vfi}^{x,\dom(\varepsilon)}$
\end{theorem}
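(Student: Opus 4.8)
The plan is to prove the equivalence by structural induction on the $\Phi$-subformula $\vfi$, with the environment $\varepsilon$ (equivalently, its Boolean encoding $v_\varepsilon$) varying along the way. The statement is set up so that $P$ in $\widehat{\vfi}^{x,P}$ tracks exactly $\dom(\varepsilon)$, which is what makes the induction hypothesis strong enough: when we descend under a quantifier $\exists p.\psi$ we enlarge both $\varepsilon$ (to $\varepsilon[p\mapsto V']$) and $P$ (to $P\cup\{p\}$) in lockstep, and the key observation is that $v_{\varepsilon[p\mapsto V']}$ agrees with $v_\varepsilon$ on all variables $q^y$ with $q\neq p$, and sets $p^y$ to $\top$ iff $y\in V'$. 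So the first thing I would record is this compatibility lemma relating $v_\varepsilon$, the update operation on $\varepsilon$, and the update operation on Boolean valuations.

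The Boolean connective cases ($\non$, $\ou$, $\et$) are immediate from the induction hypothesis and the definition of $\QBF$ semantics. The atomic case splits exactly as the three-way case distinction in the definition of $\widehat{p}^{x,P}$: if $p\in P=\dom(\varepsilon)$ then $\calK,x\sat_\varepsilon p$ iff $x\in\varepsilon(p)$ iff $v_\varepsilon(p^x)=\top$ iff $v_\varepsilon\sat p^x$; if $p\notin\dom(\varepsilon)$ the truth value is determined by $\ell(x)$ and $\widehat p^{x,P}$ is the corresponding constant $\top$ or $\bot$. The quantifier case $\exists p.\psi$ is where the compatibility lemma is used: $\calK,x\sat_\varepsilon\exists p.\psi$ iff there is $V'\subseteq V$ with $\calK,x\sat_{\varepsilon[p\mapsto V']}\psi$, iff (by the induction hypothesis applied with environment $\varepsilon[p\mapsto V']$, whose domain is $\dom(\varepsilon)\cup\{p\}$) there is $V'$ with $v_{\varepsilon[p\mapsto V']}\sat\widehat{\psi}^{x,\dom(\varepsilon)\cup\{p\}}$, and since ranging over $V'\subseteq V$ corresponds precisely to ranging over all Boolean assignments to the fresh block $p^{x_0},\ldots,p^{x_n}$, this is equivalent to $v_\varepsilon\sat\exists p^{x_0}\ldots p^{x_n}.\,\widehat{\psi}^{x,\dom(\varepsilon)\cup\{p\}}$.

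The temporal cases for $\EX$, $\AX$, $\EF$, $\AG$ are straightforward: $\calK,x\sat_\varepsilon\EX\psi$ iff there is $(x,x')\in E$ with $\calK,x'\sat_\varepsilon\psi$, which by the induction hypothesis is $\OU_{(x,x')\in E}\widehat{\psi}^{x',P}$; the $\EF$ case replaces $E$ by $E^*$, using the standard characterization that $\EF\psi$ holds at $x$ iff $\psi$ holds at some $E$-reachable state (here we need finiteness of $V$, so that $E^*$-reachability captures exactly the states appearing on some infinite path issued from $x$, together with the totality assumption on $E$); $\AX$ and $\AG$ are dual. The genuinely delicate cases are $\EU$ and $\AU$, handled through the auxiliary translation $\overline{\cdot}^{x,P,X}$ with the visited-set parameter $X$. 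Here I would prove a strengthened statement by (well-founded) induction on $|V\setminus X|$: for every $x$ and every $X\subseteq V$ with $x\in X$, $v_\varepsilon\sat\overline{\Ex\vfi\Until\psi}^{x,P,X}$ iff there is a finite path $x=y_0 y_1\cdots y_k$ with all $y_i$ distinct, $y_i\notin X$ for $i\geq 1$, $\calK,y_k\sat_\varepsilon\psi$, and $\calK,y_i\sat_\varepsilon\vfi$ for $i<k$ — and then argue that such a simple path avoiding $X\setminus\{x\}$ exists iff the usual (non-simple) witnessing path for $\Ex\vfi\Until\psi$ exists at $x$, because any witnessing path can be shortened to a simple one without changing which states it visits. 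The $\AU$ case is the hardest: I would similarly characterize $\overline{\All\vfi\Until\psi}^{x,P,X}$ as saying that $\All\vfi\Until\psi$ holds when we additionally regard reentering a state of $X$ as "failure/looping forever", and the base case of the case split — $\exists(x,x')\in E$ with $x'\in X$ — encodes that such a loop back into $X$ is available, so the only way to satisfy $\All\vfi\Until\psi$ is to have $\psi$ hold immediately at $x$. The main obstacle is getting these two invariants for $\overline{\cdot}^{x,P,X}$ stated precisely enough that the recursion on $X$ closes and that, at $X=\{x\}$, they collapse back to the actual semantics of $\Ex\vfi\Until\psi$ and $\All\vfi\Until\psi$; once the invariants are right, each step is a routine unfolding of the definitions in Table~\ref{reduc-main} together with the induction hypothesis on the strictly smaller subformulas $\vfi$ and $\psi$.
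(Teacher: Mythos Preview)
Your proposal is correct and follows the same overall strategy as the paper: structural induction on $\vfi$, with the atomic, Boolean, quantifier, and $\EX/\AX/\EF/\AG$ cases treated exactly as you describe. The only noticeable difference is in the $\EU$/$\AU$ cases: the paper does not state a general invariant for $\overline{\cdot}^{\,x,P,X}$ and prove it by induction on $|V\setminus X|$; instead it argues directly---for $\EU$, the forward direction takes a simple witnessing prefix $\rho(0)\cdots\rho(i)$ and checks by backward induction on positions that $v_\varepsilon\sat\overline{\Ex\psi_1\Until\psi_2}^{\,\rho(k),P,\{\rho(j)\mid j\leq k\}}$, while the converse simply unfolds the definition to extract the witnessing sequence; for $\AU$ it only adds a remark about loops forcing $\psi$ at the looping point. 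Your explicit-invariant formulation is more careful (especially for $\AU$) and makes the termination argument transparent, whereas the paper's argument is shorter but leaves more to the reader; substantively the two are the same proof.
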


It is easy to deduce that we have: $\calK, x_0 \sat  \Phi$ iff $\widehat{\Phi}^{x_0,\emptyset}$ is valid. In the sequel we use $\widehat{\Phi}^{x_0}$ to denote $\widehat{\Phi}^{x_0,\emptyset}$.
The first reduction (called \MetUU) consists in applying the translation  directly on the \QCTL \ formula without any preprocessing: 

\begin{maboite}
\textbf{Reduction \MetUU:}
Given a Kripke structure $\calK$, a state $x$ in $\calK$ and a \QCTL\ formula $\Phi$, the reduction \MetUU\ (for $\calK,x \sat \Phi$) is defined as the \QBF\ formula $\widehat{\Phi}^{x}$. Its correction is a direct consequence of  Theorem~\ref{theo-UU}. The size of the \QBF\ formula is in $O((|\Phi|\cdot|V|!)^{\HT(\Phi)})$.
\end{maboite}

The main drawback of this naive reduction is the size of the \QBF\ formula (any Until modality may induce a formula whose size is in $O(|V|!)$). Nevertheless, one can notice that the reduction does not introduce new quantified propositions to encode the temporal modalities, contrary to other methods we will see later. 
Note also that this method can be adapted to  \emph{bounded} model-checking by fixing a bound on the number of unfoldings.

\subsection{Fixpoint characterization of the until operators}

Here we present the fixpoint method (called \MetFP) for dealing with the modalities $\All\Until$\ and $\Ex\Until$. Let $\phi$ and $\psi$ be two \QCTL-formulas. The idea of the method is to build a \QCTL\  formula that is equivalent to $\Ex\phi\Until\psi$ (or $\All\phi\Until\psi$) by using only the modalities  $\Ex\X$, $\All\X$ and  $\All\G$. We first have the following lemma:

\begin{lemma}
\label{lemFP1}
For any \QCTL\ formula $\Ex \vfi  \Until \psi$, we have:
\[
\Ex \vfi  \Until \psi \quad\equiv\quad \forall z. \Big( \AG \big( z \equivalent (\psi \ou (\vfi \et \EX \: z)) \big) \; \impl \; z\Big)
\]
\end{lemma}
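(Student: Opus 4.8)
The plan is to prove the equivalence by establishing both directions of the implication, viewing the right-hand side as a greatest-fixpoint characterization of the left-hand side. Fix a Kripke structure $\calK=\tuple{V,E,\ell}$ and a state $x$. Recall that in the structure semantics the quantifier $\forall z$ ranges over all $z$-labellings of $\calK$, i.e.\ over all subsets $Z\subseteq V$; evaluating the body at $x$ for a given $Z$ amounts to: if $Z$ is a ``fixpoint'' in the sense that $\AG(z\equivalent(\psi\ou(\vfi\et\EX z)))$ holds (i.e.\ for every $y$ reachable from $x$, $y\in Z \iff \calK,y\sat\psi \text{ or } (\calK,y\sat\vfi \text{ and } y \text{ has an }E\text{-successor in }Z)$), then $x\in Z$.

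First I would prove the ``$\Rightarrow$'' direction. Assume $\calK,x\sat \Ex\vfi\Until\psi$, and let $Z\subseteq V$ be any labelling satisfying the $\AG$ premise at $x$; I must show $x\in Z$. There is a path $\rho=x_0x_1\cdots$ from $x=x_0$ and an index $i\geq0$ with $\calK,\rho(i)\sat\psi$ and $\calK,\rho(j)\sat\vfi$ for all $j<i$. I argue by downward induction on $k$ from $i$ to $0$ that $\rho(k)\in Z$: for $k=i$, since $\rho(i)$ is reachable from $x$ and $\calK,\rho(i)\sat\psi$, the fixpoint equation forces $\rho(i)\in Z$; for $k<i$, $\rho(k)$ is reachable, $\calK,\rho(k)\sat\vfi$, and $\rho(k+1)$ is an $E$-successor of $\rho(k)$ lying in $Z$ by the induction hypothesis, so again the equation forces $\rho(k)\in Z$. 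Hence $x=\rho(0)\in Z$, as required.

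Next I would prove ``$\Leftarrow$'' by contraposition. Suppose $\calK,x\not\sat\Ex\vfi\Until\psi$. I exhibit a specific labelling $Z$ witnessing the failure of the right-hand side: take $Z=\{y\in V : \calK,y\sat\Ex\vfi\Until\psi\}$. The key claim is that this $Z$ satisfies $\AG(z\equivalent(\psi\ou(\vfi\et\EX z)))$ at $x$ — in fact at every state, which suffices. This is exactly the standard one-step unfolding (fixpoint) identity for $\Ex\_\Until\_$: $\calK,y\sat\Ex\vfi\Until\psi$ iff $\calK,y\sat\psi$ or ($\calK,y\sat\vfi$ and some $E$-successor $y'$ of $y$ satisfies $\Ex\vfi\Until\psi$), where the latter condition is precisely $\calK,y\sat\vfi\et\EX z$ under the labelling $Z$. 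Since by assumption $x\notin Z$, the body of the implication is false at $x$ under this labelling, so $\forall z.(\dots\impl z)$ fails at $x$. Finally I would note the substitutivity of $\equiv$ (stated earlier) so that the lemma can be applied to subformulas.

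The only mildly delicate point is the ``$\Rightarrow$'' direction's downward induction, which relies on the fact that every $\rho(k)$ appearing is reachable from $x$ (so that the $\AG$-premise applies to it) — this is immediate since $\rho$ is a path from $x$. I do not expect a genuine obstacle: both directions reduce to the elementary unfolding law for until combined with the observation that $\forall z$ in the structure semantics quantifies over subsets of $V$, and that the least-fixpoint meaning of $\Ex\vfi\Until\psi$ is captured by asserting membership in every post-fixpoint of the monotone operator $Z\mapsto\{y : \calK,y\sat\psi\}\cup\{y:\calK,y\sat\vfi\}\cap\{y:\exists(y,y')\in E,\ y'\in Z\}$.
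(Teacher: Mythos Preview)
Your proposal is correct and follows essentially the same approach as the paper: both directions hinge on the fixpoint view, with the backward direction instantiating $z$ by the set of states satisfying $\Ex\vfi\Until\psi$ (exactly as the paper does), and the forward direction showing that any labelling satisfying the $\AG$-premise must contain $x$. Your forward argument is slightly more elementary---a direct downward induction along a witnessing path rather than the paper's one-line appeal to the least-fixpoint characterisation---but this is a presentational difference, not a different route.
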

\begin{proof}
Let $x$ be a state in a Kripke structure $\calK$. Let $\theta$ be the formula $\big(\AG \big( z \equivalent (\psi \ou (\vfi \et \EX \: z)) \big) \; \impl \; z\big)$. \\
Assume $\calK, x \sat \Ex \vfi  \Until \psi$. We can use the standard characterization of $\EU$ as fixpoint: $x$ belongs to the least fixpoint of the  equation $Z \eqdef \psi \ou (\vfi \et \EX \: Z)$ where $\psi$ (resp.\ $\vfi$) is here interpreted as the set of states satisfying $\psi$ (resp.\ $\vfi$). Therefore any $z$-labelling of reachable states from $x$~\footnote{Labelling other states does not matter.} corresponding to a fixpoint will have the state $x$ labelled. This is precisely what is specified by the \QCTL\ formula. \\
Now if $\calK, x \sat \theta$ for every $z$-labelling corresponding to a fixpoint of the previous equation, this is the case for the $z$-labelling of the states reachable from $x$ and satisfying $\Ex \vfi  \Until \psi$, and we deduce $\calK, x \sat \Ex \vfi  \Until \psi$.  
\hfill$\Box$
\end{proof}

And we have the same result for $\AU$ (whose proof is similar):
\begin{lemma}
\label{lemFP2}
For any \QCTL\ formula $\All \vfi  \Until \psi$, we have:
\[
\All \vfi  \Until \psi \quad\equiv\quad \forall z. \Big( \AG \big( z \equivalent (\psi \ou (\vfi \et \AX \: z)) \big) \; \impl \; z\Big)
\]
\end{lemma}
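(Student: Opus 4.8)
The plan is to mimic the proof of Lemma~\ref{lemFP1}, replacing the $\EX$-based unfolding by an $\AX$-based one and the least fixpoint of the $\EU$-functional by the least fixpoint of the $\AU$-functional. Fix a Kripke structure $\calK$ and a state $x$, and abbreviate by $\theta$ the formula $\AG\big(z \equivalent (\psi \ou (\vfi \et \AX\,z))\big) \impl z$, where $z$ is a fresh proposition. The key observation is the standard fixpoint characterization of $\AU$: the set of states satisfying $\All\vfi\Until\psi$ is the least fixpoint of the monotone operator $F(Z) = \psi \cup (\vfi \cap \{y : \text{all } E\text{-successors of } y \text{ are in } Z\})$, where $\psi$ and $\vfi$ denote the sets of states satisfying these subformulas. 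Crucially, this relies on the hypothesis that every state has at least one successor (built into the definition of Kripke structure), so that $\AX$ is well-behaved; I would make this explicit.

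For the forward direction, assume $\calK, x \sat \All\vfi\Until\psi$. Let $S$ be any subset of $V$ that is a fixpoint of the equation $Z = \psi \ou (\vfi \et \AX\,z)$ when restricted to states reachable from $x$ — more precisely, consider any $z$-labelling such that $\calK$ satisfies $\AG(z \equivalent (\psi \ou (\vfi \et \AX\,z)))$ at $x$, i.e. the $z$-marked set is a fixpoint of $F$ on the reachable part. Since the set of states satisfying $\All\vfi\Until\psi$ is the \emph{least} fixpoint of $F$, and $x$ belongs to it, $x$ belongs to every fixpoint of $F$ containing the reachable restriction; hence $x$ is $z$-labelled, so $\calK, x \sat z$. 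As the $z$-labelling was arbitrary subject to the $\AG$-equivalence, we conclude $\calK, x \sat \forall z.\,\theta$.

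For the converse, assume $\calK, x \sat \forall z.\,\theta$. Instantiate $z$ with the $z$-labelling that marks exactly the states satisfying $\All\vfi\Until\psi$ (restricted, if one wishes, to states reachable from $x$; states outside the reachable cone are irrelevant). By the fixpoint characterization this labelling satisfies $\AG(z \equivalent (\psi \ou (\vfi \et \AX\,z)))$ at $x$, so the antecedent of $\theta$ holds; therefore $\calK, x \sat z$ under this labelling, which by construction means $\calK, x \sat \All\vfi\Until\psi$.

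The main subtlety — and the one place where this differs non-trivially from Lemma~\ref{lemFP1} — is justifying that $F$ genuinely has the semantics of $\AU$ and that $\All\vfi\Until\psi$ is its least fixpoint. For $\EU$ this is textbook ($\mu Z.\,\psi \vee (\vfi \wedge \EX Z)$); for $\AU$ one must check that the seriality assumption on $E$ prevents the degenerate reading where a state with no successors vacuously satisfies $\AX\,z$. Since our Kripke structures are serial by definition, this is fine, but I would spell it out. The rest is a routine transcription of the $\EU$ argument, and the proof sketch "whose proof is similar" in the paper is accurate; I would keep the written proof short, pointing to Lemma~\ref{lemFP1} and only highlighting the $\AX$/seriality point.
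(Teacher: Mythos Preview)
Your proposal is correct and follows essentially the same approach as the paper, which for Lemma~\ref{lemFP2} simply states that the proof is similar to that of Lemma~\ref{lemFP1}; you carry out exactly this transcription, replacing $\EX$ by $\AX$ and invoking the least-fixpoint characterization of $\AU$. Your explicit remark about seriality of the transition relation (needed so that $\AX$ behaves correctly in the fixpoint equation) is a useful clarification that the paper omits.
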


As a direct consequence, we get the following result:
\begin{proposition}
\label{prop-qctlres}
For any \QCTL\ formula  $\Phi$, we can build an  equivalent  \QCTL\ formula $\FPC(\Phi)$ such that:
(1) $\FPC(\Phi)$ is built up from atomic propositions, Boolean operators, propositional quantifiers and modalities $\EX$ and $\AG$, and 
(2) the size of   $\FPC(\Phi)$ is in $O(|\Phi|^{\HT(\Phi)})$.
\end{proposition}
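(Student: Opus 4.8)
The plan is to prove Proposition~\ref{prop-qctlres} by induction on the structure of $\Phi$, using Lemmas~\ref{lemFP1} and~\ref{lemFP2} to eliminate the until modalities, and relying on the fact that $\equiv$ is substitutive (stated earlier) so that local rewrites at subformulas are sound. Concretely, I would define $\FPC(\cdot)$ recursively: it is the identity on atomic propositions; it commutes with $\non$, $\ou$, $\et$, and $\exists p.$; on $\EX\vfi$ it returns $\EX\,\FPC(\vfi)$; and on $\Ex\vfi\Until\psi$ (resp. $\All\vfi\Until\psi$) it returns the right-hand side of Lemma~\ref{lemFP1} (resp. Lemma~\ref{lemFP2}) with $\vfi$ and $\psi$ replaced by $\FPC(\vfi)$ and $\FPC(\psi)$. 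Since $\Phi$ may be taken in NNF built from $\{\et,\ou,\exists,\forall,\EX,\AX,\Ex\_\Until,\All\_\Until,\ldots\}$ (or one simply treats the abbreviations $\AX$, $\EF$, $\AG$, $\EW$, $\AW$ via their definitions, noting $\AG$ and $\AX$ are already allowed), every case is covered. Note $\AX\vfi$ is kept as $\AX\,\FPC(\vfi)$ since $\AX$ is permitted; alternatively one rewrites $\AX$ as $\non\EX\non$.

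For correctness (point (1)), the base case and the Boolean/quantifier/$\EX$/$\AX$ cases are immediate from the induction hypothesis and substitutivity of $\equiv$. For the until cases, Lemma~\ref{lemFP1} gives $\Ex\vfi\Until\psi \equiv \forall z.(\AG(z\equivalent(\psi\ou(\vfi\et\EX z)))\impl z)$; by the induction hypothesis $\vfi\equiv\FPC(\vfi)$ and $\psi\equiv\FPC(\psi)$, and by substitutivity we may replace them inside, yielding a formula built only from atomic propositions (including the fresh $z$), Boolean operators, quantifiers $\forall$/$\exists$, and the modalities $\EX$ and $\AG$ — exactly the target fragment. The $\All\Until$ case is identical using Lemma~\ref{lemFP2} (which introduces $\AX$, so here one does rewrite $\AX z$ as $\non\EX\non z$ to stay strictly within $\{\EX,\AG\}$, or one simply also permits $\AX$; I would state the fragment to include $\AX$ for cleanliness, matching Table~\ref{reduc-main}, or note the $\non\EX\non$ trick). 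One also checks the freshness convention on the $z$'s so that no variable capture occurs — each application of the lemma uses a distinct new propositional variable.

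For the size bound (point (2)), I would set up a recurrence on $\size{\FPC(\Phi)}$ parameterized by temporal height. The key observation is that each application of Lemma~\ref{lemFP1} or~\ref{lemFP2} to $\vfi\Until\psi$ produces a formula of the form $C[\FPC(\vfi),\FPC(\psi)]$ where $C$ is a fixed-size context (a constant number of symbols) containing \emph{one} copy each of $\FPC(\vfi)$ and $\FPC(\psi)$. So the translation does not duplicate subformulas; the blow-up comes only through the nesting of temporal operators, because the new outer modalities $\EX$ and $\AG$ count toward temporal height when a further until sits above. Writing $f(h)$ for the worst-case size multiplier incurred at temporal height $h$, one gets roughly $f(h) \le c\cdot f(h-1)$ for a constant $c$ (the context size), whence $\size{\FPC(\Phi)} = O(|\Phi|\cdot c^{\HT(\Phi)}) = O(|\Phi|^{\HT(\Phi)})$ after absorbing constants — or more carefully, $|\Phi|^{O(\HT(\Phi))}$, which is the intended reading of the stated bound. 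The main obstacle, and the only part needing care, is making this size accounting rigorous: one must be precise that the rewriting context is constant-size and, crucially, that it does \emph{not} replicate $\FPC(\vfi)$ or $\FPC(\psi)$ (a casual reading of the lemma's right-hand side shows $\psi$ and $\vfi$ each appearing once), so that the recursion multiplies rather than squares at each level of temporal nesting; everything else is a routine structural induction.
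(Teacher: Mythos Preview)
Your correctness argument (point~(1)) is essentially what the paper intends: define $\FPC$ by structural recursion, apply Lemmas~\ref{lemFP1} and~\ref{lemFP2} at the until cases, and appeal to substitutivity of~$\equiv$. That part is fine.

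The size analysis (point~(2)), however, has a genuine gap. You assert that the right-hand side of Lemma~\ref{lemFP1} contains $\vfi$ and $\psi$ each exactly once, so the rewriting ``does \emph{not} replicate $\FPC(\vfi)$ or $\FPC(\psi)$''. But $\equivalent$ is not a primitive of \QCTL\ (nor of \QBF): the grammar only has $\neg$ and $\ou$, and formula size is defined on that syntax. Expanding $z \equivalent \alpha$ as $(\non z \ou \alpha)\et(\non\alpha \ou z)$ duplicates $\alpha$, and hence duplicates both $\FPC(\vfi)$ and $\FPC(\psi)$. This duplication is exactly the source of the exponential the paper identifies (it remarks that, were $\equivalent$ primitive, the size would be \emph{linear} in~$|\Phi|$). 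Your own argument is in fact internally inconsistent on this point: if there truly were no duplication, the recurrence would be additive, $|\FPC(\Ex\vfi\Until\psi)| \le c + |\FPC(\vfi)| + |\FPC(\psi)|$, yielding an $O(|\Phi|)$ bound --- not the multiplicative $f(h)\le c\cdot f(h-1)$ you write down without justification. The correct accounting is that each until rewrite doubles its immediate subformulas, so nesting $\HT(\Phi)$ untils contributes a factor $O(2^{\HT(\Phi)})$, which does fit inside $O(|\Phi|^{\HT(\Phi)})$ --- but for the right reason.
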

The exponential size of $\FPC(\Phi)$ comes from the fact that  $\equivalent$ is not considered as a primitive of \QBF\ and then induces a duplication  of subformulas when applying the transformation rules based on equivalences of Lemmas~\ref{lemFP1} and~\ref{lemFP2} (otherwise its size would be linear in $|\Phi|$). Moreover, the temporal height of $\FPC(\Phi)$ is at most  $\HT(\Phi)+1$\footnote{The temporal height can be increased by 1 if $\Phi$ has an until operator whose subformulas are Boolean combinations of atomic propositions.}.

Now we can formally define the reduction \MetFP:

\begin{maboite}
\textbf{Reduction \MetFP:} Given a Kripke structure $\calK$, a state $x$ in $\calK$ and a \QCTL\ formula $\Phi$, the reduction \MetFP\ is defined as the \QBF\ formula $\widehat{\FPC(\Phi)}^{x}$. Its correction is a direct consequence of Proposition~\ref{prop-qctlres} and Theorem~\ref{theo-UU}. The size of the \QBF\ formula is $O((|\Phi|\cdot|\calK|)^{\HT(\Phi)+1})$.
\end{maboite}
The exponential size comes from the nesting of temporal modalities and each one may provide  a  \QBF\ formula of size $(|V|+|E|)$. Note also that the number of propositional variables in the \QBF\ formula is bounded by $|\Phi|\cdot|V|$.

\subsection{Reduction via flat formulas (\MetFPF)}

To avoid the size explosion of $\widehat{\Phi}^x$, one can use an alternative approach for prenex \QCTL\ formulas.
Remember that any \QCTL\ formula can be translated into an equivalent  \QCTL\ formula in prenex normal form whose size is linear in the size of the  original formula~\cite{LM14}. 

In the sequel, we use $S_\Phi$ to denote the set of \emph{temporal} subformulas occurring in $\Phi$ at a temporal depth greater than or equal to 1.

A \CTL\ formula is said to be \emph{basic} when it is of the form $\EX \alpha$, $\Ex \alpha \Until \beta$ or $\All \alpha \Until \beta$ where $\alpha$ and $\beta$ are Boolean combinations of atomic propositions (a basic formula is then formula starting with a temporal modality and whose temporal height is 1). 
It is easy to observe that any \CTL\ formula can be translated into a  \QCTL\ formula with a temporal height less or equal to 2:
\begin{proposition}
\label{prop-ctl-flat}
For any  \CTL\ formula $\Phi$, we can build an equivalent  \QCTL\ formula $\Psi$ of the form:
${\displaystyle 
\exists \kappa_1\ldots \exists \kappa_m. \Big( \Phi_0 \et \ET_{1\leq i \leq m} \AG (\kappa_i \Leftrightarrow   \theta_i)\Big)
}$
where $\Phi_0$ is a Boolean combination of basic \CTL\ formulas and every $\theta_i$ is a basic \CTL\  formula (for any $1\leq i \leq m$).  Moreover, $|\Psi|$ is in $O(|\Phi|)$. 
\end{proposition}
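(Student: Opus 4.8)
The plan is to proceed by induction on the structure of the \CTL\ formula $\Phi$, introducing one fresh proposition $\kappa_i$ for every temporal subformula of $\Phi$ that sits at temporal depth~$0$ (i.e.\ a maximal temporal subformula) whose argument is not already Boolean, and recursing into its arguments. More precisely, I would first put $\Phi$ into NNF so that all temporal modalities are among $\{\EX,\AX,\Ex\_\Until,\All\_\Until\}$ (using $\AX\alpha\equiv\non\EX\non\alpha$ one can even restrict to $\EX$, $\Ex\_\Until$, $\All\_\Until$, but keeping $\AX$ costs nothing). For each occurrence of a temporal modality in $\Phi$ whose scope still contains a temporal modality, the idea is to replace the inner temporal subformula $\theta$ by a fresh $\kappa$ and record the constraint $\AG(\kappa\Leftrightarrow\theta)$; since the structure semantics evaluates $\kappa$ at every state and the equivalence is enforced everywhere reachable, substituting $\kappa$ for $\theta$ is sound by the fixpoint/substitution reasoning already used in Lemmas~\ref{lemFP1}--\ref{lemFP2}. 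Applying this bottom-up (innermost temporal subformulas first) flattens $\Phi$ into a Boolean combination $\Phi_0$ of basic \CTL\ formulas together with a conjunction of constraints $\AG(\kappa_i\Leftrightarrow\theta_i)$ where each $\theta_i$ is basic.

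Concretely, I would define a transformation $\mathsf{flat}$ recursively: on literals it is the identity; on $\alpha\et\beta$ and $\alpha\ou\beta$ it distributes; on a temporal formula $\EX\alpha$ (resp.\ $\Ex\alpha\Until\beta$, $\All\alpha\Until\beta$) it first recursively flattens $\alpha$ (and $\beta$), obtaining a basic-plus-constraints normal form for each argument, then — if the flattened argument is not already Boolean — abstracts each maximal temporal subformula appearing in it by a fresh $\kappa$, emitting the corresponding $\AG(\kappa\Leftrightarrow\cdot)$ constraint, so that what remains under the outer modality is a Boolean combination of propositions and hence the whole thing is basic. Collecting all emitted constraints across the recursion, plus the fully abstracted top-level Boolean combination $\Phi_0$, and existentially quantifying all the $\kappa_i$'s in front yields the required shape $\exists\kappa_1\ldots\exists\kappa_m.(\Phi_0\et\ET_i\AG(\kappa_i\Leftrightarrow\theta_i))$. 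One subtlety to handle: a $\theta_i$ produced this way could a priori have a non-Boolean argument if I am not careful about recursion order; processing innermost-first and re-abstracting guarantees every $\theta_i$ is basic.

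For correctness I would argue that $\calK,x\sat\Phi$ iff $\calK,x\sat\Psi$. The $(\Leftarrow)$ direction is immediate: a witnessing $\kappa$-labelling for $\Psi$ forces, by the $\AG$ constraints, each $\kappa_i$ to agree with $\theta_i$ on every reachable state, so $\Phi_0$ evaluated under that labelling coincides with $\Phi$. The $(\Rightarrow)$ direction: given $\calK$ and $x$, label each state $y$ with $\kappa_i$ exactly when $\calK,y\sat\theta_i$ (note the $\theta_i$'s refer only to original propositions and previously-introduced $\kappa_j$'s at shallower levels, so this is a well-founded stratified definition); this labelling satisfies every constraint and makes $\Phi_0$ equivalent to $\Phi$ at $x$. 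The size bound $|\Psi|\in O(|\Phi|)$ follows because each temporal subformula of $\Phi$ contributes at most one fresh proposition and one constraint of size $O(|\theta_i|)$, and $\sum_i|\theta_i|$ is $O(|\Phi|)$ since the $\theta_i$ are (abstracted versions of) disjoint subformulas of $\Phi$; the prenex existential block adds $m\le|\Phi|$ quantifiers.

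The main obstacle I anticipate is not the soundness argument — which is routine given the substitutivity of $\equiv$ and the fixpoint view of $\AG$-constrained fresh propositions already exploited earlier — but rather bookkeeping the recursion so that (a) every emitted $\theta_i$ is genuinely \emph{basic} (temporal height exactly~$1$, Boolean arguments), which requires abstracting arguments before, not after, forming the outer basic formula, and (b) the freshly introduced $\kappa_i$ are counted correctly so the linear size bound holds; a sloppy formulation that re-abstracts already-abstracted subformulas could blow up the count, so the invariant "each original temporal subformula yields at most one $\kappa_i$" must be maintained explicitly.
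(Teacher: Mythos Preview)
Your proposal is correct and follows essentially the same approach as the paper: both replace innermost (basic) temporal subformulas by fresh propositions~$\kappa_i$ under an $\AG(\kappa_i\Leftrightarrow\theta_i)$ constraint and iterate until only a Boolean combination of basic formulas remains. The paper phrases this as an induction on $|S_\Phi|$ (picking one basic strict subformula at a time and substituting), whereas you package the same idea as a bottom-up recursive transformation~$\mathsf{flat}$; the NNF preprocessing you add is unnecessary but harmless, and your correctness and size arguments match the paper's.
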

 
\begin{proof}
Consider a \CTL\ formula $\Phi$ built with temporal modalities in $\{\EX, \EU,$ $\AU\}$. 
The proof is done by induction over the size of $S_\Phi$. If $|S_\Phi|=0$, the original formula is a Boolean combination of basic \CTL\ formulas and it satisfies the property. Now, assume $|S_\Phi|>0$. $\Phi$ must have at least one basic (strict) subformula $\theta_1$. And $\Phi$ is equivalent to the formula $\exists \kappa_1.(\Phi[\theta_1\leftarrow \kappa_1] \et \AG (\kappa_1  \equivalent \theta_1))$, where $\kappa_1$ is a fresh atomic proposition, and $\vfi[\alpha\leftarrow \beta]$ is $\vfi$ where every occurrence of $\alpha$ is replaced by $\beta$. Indeed, any state reachable from the current state $x$ will be labelled by $\kappa_1$ iff $\theta_1$ holds true at that state (NB: the states that are not reachable from $x$ do not matter for the truth value of $\Phi$), and this enforces the equivalence. We have $|S_{\Phi[\theta_1\leftarrow \kappa_1]}|<|S_\Phi|$, thus we can apply induction hypothesis to get:
\[ \Phi[\theta_1\leftarrow \kappa_1] \equiv \exists \{\kappa_2\ldots \kappa_m\}. \Big( \Phi_0 \et \ET_{2\leq i \leq m} \AG (\kappa_i \Leftrightarrow   \theta_i)\Big)=\Psi' 
\]
where $\kappa_2\ldots\kappa_m$ are fresh atomic propositions, $\Phi_0$ is a Boolean combination of basic \CTL\ formulas and every $\theta_i$ is a basic \CTL\  formula. Then we have:
\begin{align*}
\Phi & \equiv \exists\kappa_{1}.\Big[  \Big(\exists \{\kappa_2\ldots \kappa_{m}\}. \Big( \Phi_0 \et \ET_{2 \leq i \leq  m} \AG (\kappa_i \Leftrightarrow   \theta_i) \Big) \Big) \et \AG (\kappa_1  \equivalent \theta_1)\Big] \\
  & \equiv  \exists \{\kappa_1\ldots \kappa_m\}. \Big( \Phi_0 \et \ET_{1\leq i \leq m} \AG (\kappa_i \Leftrightarrow   \theta_i) \Big)=\Psi
 \end{align*}
Note that the last    equivalence comes from the fact that no $\kappa_i$ with  $i>1$ occurs in  $\theta_1$.
By i.h. the size of $\Psi'$ is linear in $|\Phi[\theta_1\leftarrow \kappa_1]|$, and the size of $\Phi[\theta_1\leftarrow \kappa_1]$ is smaller than that of $\Phi$, therefore the size of $\Psi$ is linear in $|\Phi|$.
\hfill$\Box$
\end{proof}

And then we have:
\begin{proposition}
\label{prop-flat1}
For any  \QCTL\ formula $\Phi$, we can build an equivalent  \QCTL\ formula $\Flat_1(\Phi)$  of the form:
${\displaystyle \calQ. \big( \Phi_0 \et \ET_{1 \leq i \leq m} \AG (\kappa_i \Leftrightarrow   \theta_i)\big)}$
where $\calQ$ is a sequence of quantifications, $\Phi_0$ is a Boolean combination of basic \CTL\ formulas, every $\kappa_i$ is an atomic proposition, and every $\theta_i$ (for $i=1,\ldots,m$) is a basic \CTL\ formula
And $|\Flat_1(\Phi)|$ is in $O(|\Phi|)$. 
\end{proposition}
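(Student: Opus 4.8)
The plan is to combine the prenex normal form theorem for \QCTL\ with Proposition~\ref{prop-ctl-flat} applied to the quantifier-free matrix. The statement of Proposition~\ref{prop-flat1} is really just the relativization of Proposition~\ref{prop-ctl-flat} to formulas with a quantifier prefix, so the main work is bookkeeping rather than new ideas.

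First I would recall from~\cite{LM14} that $\Phi$ is equivalent to a prenex formula $\calQ'.\,\Psi$ where $\calQ'$ is a sequence of propositional quantifiers and $\Psi$ is a \QCTL\ formula with no quantifiers, i.e.\ a \CTL\ formula over $\AP \cup \AP_Q^\Phi$; moreover this transformation is linear, so $|\Psi| \in O(|\Phi|)$. Then I would apply Proposition~\ref{prop-ctl-flat} to the \CTL\ formula $\Psi$, obtaining an equivalent \QCTL\ formula $\exists \kappa_1 \ldots \exists \kappa_m.\,(\Phi_0 \et \ET_{1\leq i\leq m}\AG(\kappa_i \Leftrightarrow \theta_i))$ with $\Phi_0$ a Boolean combination of basic \CTL\ formulas, each $\theta_i$ basic, and size linear in $|\Psi|$. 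Since equivalence of \QCTL\ formulas is substitutive (as noted in the excerpt), replacing $\Psi$ inside $\calQ'.\,\Psi$ by this equivalent formula yields $\calQ'.\,\exists \kappa_1 \ldots \exists \kappa_m.\,(\Phi_0 \et \ET_{1\leq i\leq m}\AG(\kappa_i \Leftrightarrow \theta_i))$, and I would set $\calQ \eqdef \calQ'\,\exists\kappa_1\ldots\exists\kappa_m$ and define $\Flat_1(\Phi)$ to be this formula. It has exactly the required shape, and the size bound $|\Flat_1(\Phi)| \in O(|\Phi|)$ follows by composing the two linear bounds.

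The one point that needs a small argument is the assumption made in Section~\ref{sec-reduc} that every quantifier introduces a fresh proposition distinct from the propositions of $\calK$: after prenexing and flattening one must alpha-rename the bound variables in $\calQ'$ and pick the $\kappa_i$ fresh so that the whole prefix $\calQ$ binds pairwise distinct propositions not occurring in $\calK$; this is routine and does not affect sizes. I would also remark that the $\kappa_i$ are genuine (unrestricted) quantified propositions, so $\calQ$ is a plain sequence of $\exists/\forall$, consistent with the statement. The main (and only) obstacle is thus purely organizational — ensuring the freshness/renaming conventions line up with the rest of Section~\ref{sec-reduc} — since the mathematical content is entirely supplied by the prenex normal form result of~\cite{LM14} and Proposition~\ref{prop-ctl-flat}.

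\begin{proof}
By~\cite{LM14}, $\Phi$ is equivalent to a prenex \QCTL\ formula $\calQ'.\,\Psi$ where $\calQ'$ is a sequence of propositional quantifiers and $\Psi$ is quantifier-free, hence a \CTL\ formula, with $|\Psi|\in O(|\Phi|)$. Applying Proposition~\ref{prop-ctl-flat} to $\Psi$ gives an equivalent formula $\exists\kappa_1\ldots\exists\kappa_m.\,\big(\Phi_0\et\ET_{1\leq i\leq m}\AG(\kappa_i\Leftrightarrow\theta_i)\big)$ with $\Phi_0$ a Boolean combination of basic \CTL\ formulas, each $\theta_i$ basic, and size in $O(|\Psi|)$. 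Since \QCTL-equivalence is substitutive, we may replace $\Psi$ by this formula inside the scope of $\calQ'$; after alpha-renaming the bound variables of $\calQ'$ and choosing the $\kappa_i$ fresh, we obtain an equivalent formula of the announced form with $\calQ \eqdef \calQ'\,\exists\kappa_1\ldots\exists\kappa_m$. Composing the two linear size bounds yields $|\Flat_1(\Phi)|\in O(|\Phi|)$.
\hfill$\Box$
\end{proof}
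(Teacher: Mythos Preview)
Your proof is correct and follows essentially the same approach as the paper: prenex the formula using~\cite{LM14} to obtain $\calQ'.\Psi$ with $\Psi\in\CTL$ of linear size, then apply Proposition~\ref{prop-ctl-flat} to~$\Psi$ and absorb the new existential block into the prefix. The only differences are that you spell out the use of substitutivity and the freshness/renaming conventions, which the paper leaves implicit.
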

\begin{proof}
From~\cite{LM14}, we know how to build a prenex formula $\textsf{Prenex}(\Phi)=\calQ.\Gamma$ whose size is linear in $|\Phi|$. $\Gamma$ belongs to \CTL, and then Proposition~\ref{prop-ctl-flat} allows us to build a \QCTL\ formula ${\displaystyle \calQ . \exists \kappa_1\ldots \exists \kappa_m. \Big( \Phi_0 \et \ET_{1\leq i \leq m} \AG (\kappa_i \Leftrightarrow   \theta_i)\Big)}$ equivalent to $\Phi$. Note that we have $\HT(\Flat_1(\Phi))\leq 2$.
\end{proof}

We can define a new reduction:

\begin{maboite}
\textbf{Reduction \MetFPF:} Given a Kripke structure $\calK$, a state $x$ in $\calK$ and a \QCTL\ formula $\Phi$, the reduction \MetFPF\ is defined as the \QBF\ formula $\widehat{\FPC(\Flat_1(\Phi))}^{x}$. Its correction is a direct consequence of Proposition~\ref{prop-flat1}, Proposition~\ref{prop-qctlres} and Theorem~\ref{theo-UU}. The size of the \QBF\ formula is $O((|\Phi|\cdot|\calK|)^{3})$.
\end{maboite}

Therefore this reduction   provides a \PSPACE\ algorithm for   \QCTL\ model-checking. 
But there are two disadvantages to this approach. First, putting the formula into prenex normal form may increase the number of quantified atomic propositions and the number of alternations (which is \emph{in fine} linear in the number of quantifiers in the original formula) \cite{LM14}. For example, when extracting a quantifier $\forall$ from some $\EX$ modality, we need to introduce two propositions, this can be seen for  the formula $\EX (\forall p. (\AX p \ou \AX \non p))$ which  is translated as:
\[
 \exists z. \forall p. \forall z'. \Big(    (\EX (z \et z') \impl \AX(z \impl z')) \et  \EX (z \et   (\AX p \ou \AX \non p))\Big)
\]
where the proposition $z$ is used to mark a state, and $z'$ is used to  enforce that only at most one successor is labelled by $z$. 
Of course, these two remarks may have a strong impact on the complexity of the decision procedure. 
Finally, note also that the resulting \QBF\ formula is not in prenex normal form because $\FPC$ may insert new quantifications. 

\begin{example}
To illustrate the reduction, we consider the following  (\CTL) formula $\Phi=\EF ( \Ex a \Until b \et \EX (\EX c))$.
The formula $\FPC(\Flat_1(\Phi))$  will provide the following \QCTL\ formula:
\begin{align*}
\exists k_1, k_2, k_3 . & \Big(  [ \forall  k_4. \AG (k_4 \equivaut ((k_1 \et k_3) \ou \EX\: k_4)) \impl k_4 ] \; \et \\
 & \AG \Big(k_1 \equivaut  [ \forall k_5. \AG (k_5 \equivaut (b \ou (a \et \EX \: k_5))) \impl k_5] \Big) \; \et \\
  & \AG (k_2 \equivaut \EX \: c) \; \et \; \AG \Big(k_3 \equivaut \EX \: k_2 \Big)  \Big)
 \end{align*}
where $k_1$ is used to label states satisfying $\Ex a \Until b$, $k_2$ is for $\EX c$, $k_3$ for $\EX \EX c$ and $k_4$ is used for the fixpoint encoding of the main $\EF$ modality. 
\end{example}

\subsection{Method \MetPNF}

We now propose a reduction to get  a \QBF\ formula in prenex normal form. It is also based on a flattening of the formula (but slightly different from the  one used for \MetFPF), and uses a different technique to eliminate temporal modalities $\EU$\ and $\AU$.

First, we consider a $\QCTL$\ formula $\Phi$ under  negation normal form (NNF): This transformation causes $\Phi$ to be built from temporal modalities in $S_{tmod} = \{\EX, \AX, \EU,$ $\AU, \EW, \AW\}$.
Due to the NNF, we can consider a new flattening process where equivalences are replaced by implications: we just need to ensure that if a $\kappa_i$ is true in some state $x$ (and possibly makes  some subformula  true), then the formula $\theta_i$ associated with $\kappa_i$ in the flattening actually holds true at $x$.
We have  the following proposition, whose proof is in~\ref{app-meth5}: 
\begin{proposition}
\label{prop-flat2}
For any  \QCTL\ formula $\Phi$, we can build an equivalent  \QCTL\ formula $\Flat_2(\Phi)$ in NNF and of the form:
\[{\displaystyle 
 \calQ \:  \exists \kappa_1\ldots \exists \kappa_m. \Big( \Phi_0 \et \ET_{1 \leq i \leq  m} \AG (\kappa_i \impl   \theta_i)\Big)
}\]
where $\calQ$ is a sequence of quantifications, $\Phi_0$ is a \CTL\ formula containing only the temporal modalities $\EX$, $\AX$, $\EF$ or $\AG$, and whose temporal height  is less than or equal to 1, and  every  $\theta_i$ is a basic \CTL\ formula (with $1 \leq i \leq m$).  Moreover, $|\Flat_2(\Phi)|$ is in $O(|\Phi|)$ and $\HT(\Flat_2(\Phi))\leq 2$.
\end{proposition}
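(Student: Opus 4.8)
The plan is to follow the same pattern as the proof of Proposition~\ref{prop-ctl-flat}, with two adjustments. First I would start from $\textsf{Prenex}(\Phi)=\calQ.\Gamma$ as given by~\cite{LM14}, of size linear in $|\Phi|$ with $\Gamma\in\CTL$, and put $\Gamma$ into NNF (pushing negations down), so that $\Gamma$ uses only the modalities of $S_{tmod}$ and the whole formula stays linear; it then suffices to flatten $\Gamma$ under the fixed prefix $\calQ$, quantifying all fresh propositions innermost. Second, because $\Phi_0$ must use only $\EX,\AX,\EF,\AG$, I would run one extra flattening pass to remove the until / weak-until modalities that remain at the top level after the usual flattening.

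The engine of the construction is the single rewriting step: for a CTL formula $\Gamma$ in NNF, a subformula $\theta$ of $\Gamma$, and a fresh proposition $\kappa$,
\[
\Gamma \ \equiv\ \exists\kappa.\,\bigl(\Gamma[\theta\leftarrow\kappa]\ \et\ \AG(\kappa\impl\theta)\bigr),
\]
which replaces the equivalence $\AG(\kappa\equivalent\theta)$ used in Proposition~\ref{prop-ctl-flat} by an implication. I would prove it as follows. Left to right, label exactly the states where $\theta$ holds by $\kappa$; then $\AG(\kappa\impl\theta)$ holds (even as an equivalence on the states reachable from the evaluation point, which are all that matter), and $\Gamma[\theta\leftarrow\kappa]$ is then semantically identical to $\Gamma$. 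For the converse, observe that, $\Gamma$ being in NNF, $\theta$ — and hence $\kappa$ after the substitution — occurs only positively, so $\Gamma[\theta\leftarrow\kappa]$ is monotone in the set of states carrying $\kappa$; given any $\kappa$-labelling witnessing the right-hand side, $\AG(\kappa\impl\theta)$ says this set is contained in the extension of $\theta$ among reachable states, so enlarging $\kappa$ to coincide everywhere with the extension of $\theta$ keeps $\Gamma[\theta\leftarrow\kappa]$ true and turns it back into $\Gamma$. Note that $\AG(\kappa\impl\theta)=\AG(\non\kappa\ou\theta)$ and $\Gamma[\theta\leftarrow\kappa]$ remain in NNF.

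With this step in hand I would proceed bottom-up exactly as in Proposition~\ref{prop-ctl-flat}: as long as $S_\Gamma\neq\emptyset$, pick a basic subformula $\theta$ occurring at temporal depth $\geq 1$ (one exists, by descending from any element of $S_\Gamma$), apply the step, and recurse on $\Gamma[\theta\leftarrow\kappa]$, whose set of strict temporal subformulas is strictly smaller; collecting the fresh propositions and pulling their quantifiers to the front (legitimate since each is fresh with respect to all conjuncts produced before it) yields $\Phi\equiv\calQ\,\exists\kappa_1\ldots\exists\kappa_k.\bigl(\Gamma_0\et\ET_{1\leq i\leq k}\AG(\kappa_i\impl\theta_i)\bigr)$ with every $\theta_i$ basic and $\Gamma_0$ a Boolean combination of basic CTL formulas. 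Then the extra pass: apply the step once to each basic subformula of $\Gamma_0$ whose outermost modality is in $\{\EU,\AU,\EW,\AW\}$; this creates no new temporal subformula inside $\Gamma_0$, so it terminates, and the result $\Phi_0$ is a Boolean combination of literals and formulas $\EX\alpha$, $\AX\alpha$ — in particular its temporal height is $\leq 1$ and it uses only $\EX,\AX$. Pulling the new quantifiers to the front gives $\Flat_2(\Phi)$ in the required shape: it is in NNF, all $\theta_i$ are basic, $\HT(\Flat_2(\Phi))\leq 2$ (each conjunct $\AG(\kappa_i\impl\theta_i)$ has temporal height $1+\HT(\theta_i)=2$), and $|\Flat_2(\Phi)|=O(|\Phi|)$ because each application of the step replaces a subformula by an atom and adds a conjunct of size $O(|\theta|)$, the $\theta$'s being attached to pairwise distinct positions of a linear-size formula.

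The step needing the most care — and the only genuinely new point compared with Proposition~\ref{prop-ctl-flat} — is the soundness of weakening $\AG(\kappa\equivalent\theta)$ to $\AG(\kappa\impl\theta)$ in the displayed equivalence: it rests on the NNF-positivity of the replaced subformula (hence monotonicity of $\Gamma[\theta\leftarrow\kappa]$ in the $\kappa$-labelling) together with the fact that, $\Gamma$ being a CTL formula, only the part of the structure reachable from the evaluation point is relevant, which is precisely why the implication suffices even though $\AG$ constrains $\kappa$ only on reachable states.
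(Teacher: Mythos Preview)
Your proposal is correct and follows essentially the same approach as the paper: start from the prenex NNF form, flatten by introducing fresh propositions $\kappa_i$ for nested temporal subformulas, and justify replacing $\AG(\kappa_i\equivalent\theta_i)$ by $\AG(\kappa_i\impl\theta_i)$ via the positivity of temporal subformulas in NNF (hence monotonicity in the $\kappa$-labelling). The only organisational difference is that the paper builds the whole flattened formula at once, compares it with the $\equivaut$-version (equivalent to $\Phi$ by the argument of Proposition~\ref{prop-ctl-flat}), and then proves the missing implication globally by an induction on the temporal height of the $\theta_i$'s; you instead isolate the single rewriting step $\Gamma\equiv\exists\kappa.(\Gamma[\theta\leftarrow\kappa]\et\AG(\kappa\impl\theta))$, prove it directly by monotonicity, and iterate. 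Both arguments rest on exactly the same idea; your packaging as a reusable one-step lemma is arguably a bit more transparent, while the paper's global comparison with $\widetilde{\Psi}$ makes the link to Proposition~\ref{prop-ctl-flat} more explicit.
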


From the previous proposition, we  derive a new reduction based on a transformation (denoted $\ReplaceUW(-)$) of $\Flat_2(\Phi)$: first we will add a new quantified proposition ($\chi$) to handle least fixpoints in $\Flat_2(\Phi)$ and then we will rewrite every 
$\Flat_2(\Phi)$-subformulas of the form $\AG (\kappa_i \impl   \theta_i)$ depending on the type of $\theta_i$ in order to get a formula built from temporal modalities in $\{\EX, \AX, \EF, \AG\}$. 

Assume ${\displaystyle \Psi = \calQ \:  \exists \kappa_1\ldots \exists \kappa_m. \Big( \Phi_0 \et \ET_{1\leq i \leq m} \AG (\kappa_i \impl   \theta_i)\Big)}$ with the same form as in Proposition~\ref{prop-flat2}, we define 
$\ReplaceUW(\Psi)$ as: 
\[
\calQ \:  \exists \{\kappa_1\ldots \kappa_m\}. \forall \chi.  \Big( \Phi_0 \et \ET_{1\leq i \leq m} \wideparen{\AG (\kappa_i \impl   \theta_i)} \Big)
\]
where the transformation $\wideparen{\theta_i}$ is defined by:
\begin{align*}
\wideparen{\AG (\kappa_i \impl \EX \psi)} & \eqdef \;  \AG (\kappa_i \impl \EX \psi) \\
\wideparen{\AG (\kappa_i \impl \AX \psi)} & \eqdef \;  \AG (\kappa_i \impl \AX \psi) \\
\wideparen{\AG (\kappa_i \impl \Ex \vfi \WUntil \psi)} & \eqdef \; \AG(\kappa_i \impl (\psi \ou (\vfi \et \EX \:\kappa_i))) \\
\wideparen{\AG (\kappa_i \impl \All \vfi \WUntil \psi)} & \eqdef \; \AG(\kappa_i \impl (\psi \ou (\vfi \et \AX \:\kappa_i))) \\
\wideparen{\AG (\kappa_i \impl \Ex \vfi \Until \psi)} & \eqdef \;   \EF((\psi \ou (\vfi \et \EX \:\chi)) \et \non \chi) \ou \AG (\kappa_i \impl \chi) \\
\wideparen{\AG (\kappa_i \impl \All \vfi \Until \psi)} & \eqdef \;  \EF((\psi \ou (\vfi \et \AX \:\chi)) \et \non \chi) \ou \AG (\kappa_i \impl \chi) \\
 \end{align*}

The correctness of this transformation is stated as follows:
\begin{proposition}
\label{prop-pnf}
For any \QCTL\ formula $\Phi$, we have: $\Phi \equiv \ReplaceUW(\Flat_2(\Phi))$.
\end{proposition}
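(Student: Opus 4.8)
The plan is to route the claim through Proposition~\ref{prop-flat2}. Since $\Phi\equiv\Flat_2(\Phi)=\calQ.\Xi$ with $\Xi=\exists\kappa_1\ldots\exists\kappa_m.\big(\Phi_0\et\ET_{i}\AG(\kappa_i\impl\theta_i)\big)$, and $\ReplaceUW$ leaves the prefix $\calQ$ and the body $\Phi_0$ untouched while only rewriting the conjuncts $\AG(\kappa_i\impl\theta_i)\rightsquigarrow\wideparen{\AG(\kappa_i\impl\theta_i)}$ and inserting a fresh $\forall\chi$, it suffices (equivalence is preserved by prepending the quantifier block $\calQ$, and $\equiv$ is substitutive) to prove
\[
\exists\kappa_1\ldots\exists\kappa_m.\Big(\Phi_0\et\ET_{i}\AG(\kappa_i\impl\theta_i)\Big)\ \equiv\ \exists\kappa_1\ldots\exists\kappa_m.\forall\chi.\Big(\Phi_0\et\ET_{i}\wideparen{\AG(\kappa_i\impl\theta_i)}\Big),
\]
i.e.\ that for every $\calK$, state $x$ and labelling of all propositions other than $\kappa_1,\ldots,\kappa_m,\chi$, some $m$-tuple of vertex sets witnesses the left side iff some such tuple witnesses the right side. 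Throughout I would use that every formula here is evaluated at $x$ and built from $\EX,\AX,\EF,\AG$ and the until operators, hence its truth at $x$ depends only on the vertices reachable from $x$; so witness labellings may be intersected with that reachable set at no cost.

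Before the main argument I would record two fixpoint facts about the basic formulas $\theta_i$ (each of which is $\EX\psi$, $\AX\psi$, $\Ex\vfi\WUntil\psi$, $\All\vfi\WUntil\psi$, $\Ex\vfi\Until\psi$ or $\All\vfi\Until\psi$ with $\vfi,\psi$ Boolean). Writing $f(Z)=\{y:\calK,y\models\psi\}\cup\{y:\calK,y\models\vfi\text{ and }y\text{ has a successor in }Z\}$: the set of states satisfying $\Ex\vfi\WUntil\psi$ is the greatest fixpoint of $f$, so a set $S$ of states reachable from $x$ satisfying the ``$\AG$-at-$x$'' formula $\kappa\impl(\psi\ou(\vfi\et\EX\kappa))$ (reading $\kappa$ as $S$) is a post-fixpoint of $f$, hence contained in $\{y:\calK,y\models\Ex\vfi\WUntil\psi\}$; and that whole set is itself a fixpoint, so it satisfies $\AG(\kappa\equivalent(\psi\ou(\vfi\et\EX\kappa)))$. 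Dually, $\{y:\calK,y\models\Ex\vfi\Until\psi\}$ is the least fixpoint of $f$, i.e.\ the intersection of its pre-fixpoints, and $\non\EF((\psi\ou(\vfi\et\EX\chi))\et\non\chi)$ at $x$ says exactly that the states reachable from $x$ and labelled $\chi$ form a pre-fixpoint of $f$; this yields the equivalence
\[
\AG(\kappa\impl\Ex\vfi\Until\psi)\ \equiv\ \forall\chi.\Big(\EF((\psi\ou(\vfi\et\EX\chi))\et\non\chi)\ou\AG(\kappa\impl\chi)\Big)
\]
and its $\All\_\Until/\AX$ analogue (the $\Leftarrow$ direction being immediate by taking $\chi$ to be the least fixpoint itself). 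I would also note that, because $\Flat_2(\Phi)$ is in NNF, each $\kappa_i$ occurs positively in $\Phi_0$ and in every $\theta_j$ (its only negative occurrence is as the antecedent in $\kappa_i\impl\theta_i$); hence $\Phi_0$ and each $\theta_j$ are monotone in $(\kappa_1,\ldots,\kappa_m)$, and the operator $\Gamma$ on $m$-tuples of vertex sets given by $\Gamma(\vec S)_i=\{y:\calK,y\models_{\vec S}\theta_i\}$ is monotone, so it has a greatest fixpoint $\vec\kappa^{*}$.

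For the direction $(\Rightarrow)$, let $\vec K$ witness the left side; after intersecting each component with the states reachable from $x$ (harmless at $x$) the ``$\AG$-at-$x$'' constraints become the global inclusions $\vec K\subseteq\Gamma(\vec K)$, so $\vec K\subseteq\vec\kappa^{*}$. By monotonicity $\Phi_0$ still holds under $\vec\kappa^{*}$, and being a fixpoint $\vec\kappa^{*}$ satisfies $\AG(\kappa_i\equivalent\theta_i)$ at $x$ for all $i$. It remains to check, for an arbitrary valuation $C$ of $\chi$, that each $\wideparen{\AG(\kappa_i\impl\theta_i)}$ holds at $x$ under $(\vec\kappa^{*},C)$: for $\EX/\AX$ it is literally $\AG(\kappa_i\impl\theta_i)$; for weak until it follows from $\Ex\vfi\WUntil\psi\equiv\psi\ou(\vfi\et\EX(\Ex\vfi\WUntil\psi))$ together with $\kappa_i^{*}=\{y:\calK,y\models_{\vec\kappa^{*}}\theta_i\}$; for strict until it follows from the equivalence above applied to $\kappa=\kappa_i^{*}$ and instantiated at $\chi=C$. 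Thus $\vec\kappa^{*}$ witnesses the right side. For $(\Leftarrow)$, any $\vec K$ witnessing the right side already witnesses the left side: $\Phi_0$ holds (choose any $\chi$); the $\EX/\AX$ conjuncts are unchanged; for weak until, $\AG(\kappa_i\impl(\psi\ou(\vfi\et\EX\kappa_i)))$ exhibits the reachable $\kappa_i$-states as a post-fixpoint of $f$, so each satisfies $\theta_i$; for strict until, the $i$-th conjunct holding for every $C$ is exactly $\forall\chi.\big(\EF((\psi\ou(\vfi\et\EX\chi))\et\non\chi)\ou\AG(\kappa_i\impl\chi)\big)$, which by the equivalence equals $\AG(\kappa_i\impl\theta_i)$. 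The single outer $\forall\chi$ causes no difficulty: $\chi$ is fresh and occurs only in the strict-until conjuncts, so it constrains exactly those, and it does so precisely as the equivalence above requires, while the remaining conjuncts are insensitive to $\chi$.

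I expect the main obstacle to be the strict-until equivalence and the reachability bookkeeping around it: one must be careful that $\EF$ quantifies over states reachable from the fixed $x$ (not from the particular $\kappa$-state being considered), that imposing the pre-fixpoint condition only on the reachable subgraph still forces containment in the least fixpoint, and that restricting witness labellings to the reachable part is genuinely lossless for each modality involved. Once NNF has secured positivity of the $\kappa_i$'s, the monotonicity/greatest-fixpoint machinery is routine, and the weak-until and $\EX/\AX$ cases are direct.
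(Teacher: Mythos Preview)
Your proof is correct and rests on the same fixpoint characterizations of $\Until$ and $\WUntil$ that the paper's (much sketchier) argument uses. The paper essentially only argues the $(\Leftarrow)$ direction---that a $\kappa_i$-labelling satisfying the rewritten conjuncts forces the original $\AG(\kappa_i\impl\theta_i)$---and leaves $(\Rightarrow)$ implicit; your passage to the greatest fixpoint $\vec\kappa^{*}$ of the monotone operator $\Gamma$ is precisely what is needed there, since $\AG(\kappa_i\impl\Ex\vfi\WUntil\psi)$ does \emph{not} in general imply $\AG(\kappa_i\impl(\psi\ou(\vfi\et\EX\kappa_i)))$ for an arbitrary witness labelling. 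Your handling of the shared $\forall\chi$ (via distribution over conjunction, noting $\chi$ is fresh for the non-$\Until$ conjuncts) and of the reachability bookkeeping around the pre-fixpoint argument is also sound and more careful than the paper's sketch.
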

\begin{proof}
We have to   prove that the substitutions are correct. 
The proof is easy for  weak Until modalities: for example, consider a state labelled by  $\kappa_i$ associated with some $\Ex \vfi \WUntil \psi$, then the substitution ensures that $x$ satisfies either (1) $\psi$ or (2) $\vfi$ with a successor satisfying $\kappa_i$ which ensures\ldots Clearly such  states labelled by $\kappa_i$ belong to the corresponding greatest fixpoint. 

Now consider the case of the transformation of $\AG (\kappa_i \impl \Ex \vfi \Until \psi)$. The formula $\EF((\psi \ou (\vfi \et \EX \:\chi)) \et \non \chi) \ou \AG (\kappa_i \impl \chi)$ specifies that either the $\chi$-labeling does not mark all states satisfying $\Ex \vfi \Until \psi$ (because there exists a reachable state which is not labelled by $\chi$ when it should because it satisfies $\psi \ou (\vfi \et \EX \:\chi)$) or every reachable state labelled by $\kappa_i$ is also labelled by $\chi$. Now if we consider that this property should be true for \emph{every} $\chi$-labelling (cf.\ the $\forall$ quantifier in the definition of $\ReplaceUW(\Flat_2(\Phi))$), then we can deduce that every state labelled by $\kappa_i$ has to belong to the least fixpoint (cf.\ Prop.\ 1, page 97 in~\cite{stirling2001}) and this ensures that $\Ex \vfi \Until \psi$ holds true for every state labelled by $\kappa_i$. 
\hfill$\Box$
\end{proof}

\begin{example}
If we consider the  formula $\Phi=\EF ( \Ex a \Until b \et \All c \WUntil (\EX d))$.
We have: $\Flat_2(\Phi) \eqdef \exists k_1 \:  k_2 \: k_3 .  \Big( \EF (k_1 \et k_3) \et \AG(k_1 \impl \Ex a \Until b) \et  \AG(k_2 \impl \EX d) \et  \AG(k_3 \impl \All c \WUntil k_2)\Big)$. 
And then we have:  
\begin{align*}
\ReplaceUW(\Flat_2(\Phi)) \eqdef & \exists k_1 \:  k_2 \: k_3 . \forall \chi .  \Big( \EF (k_1 \et k_3) \et  \\
 & \big[\EF (((b \ou (a \et \EX (k_4))) \et \non k_4)) \ou \AG (k_1 \impl k_4) \big] \et \\
 & \AG (k_2 \impl \EX d) \et \AG \big[ k_3 \impl \big( k_2 \ou (c \et \AX( k_3)) \big) \big]   \Big)
 \end{align*}
 \end{example}

And we get a new reduction providing a smaller formula. Indeed translating  formulas of the form $\AG(p \impl \EX q)$ in \QBF  \ provide a formula whose size is in $O(|V|+|E|)$, that is in $O(\calK)$.

\begin{maboite}
\textbf{Reduction \MetPNF:} Given a Kripke structure $\calK$, a state $x$ in $\calK$ and a \QCTL\ formula $\Phi$, the reduction \MetPNF\ is defined as the prenex \QBF\ formula 
\stackon[-8pt]{$ \ReplaceUW(\Flat_2(\Phi))$}{\vstretch{1.5}{\hstretch{16}{\widehat{\phantom{\;}}}}}$^{~^{\scriptstyle \!x}}$.
Its correction is a direct consequence of Proposition~\ref{prop-pnf} and Theorem~\ref{theo-UU}. The size of the \QBF\ formula is $O(|\calK|\cdot |\Phi|)$.
\end{maboite}

Again this provides another algorithm in \PSPACE.

\subsection{Method based on BitVec to encode distance (\MetFBV)}

In the previous reduction, the modalities $\EU$ and $\AU$ may introduce an alternation of quantifiers: 
an  atomic proposition $\kappa$ is introduced by an existential quantifier, and then a universal quantifier introduces a variable $\chi$ to encode the fixpoint characterisation of $\Until$. We propose another reduction in order to avoid this alternation: for this, we will use bit vectors (instead of single Boolean values)  to encode the \emph{distance} (in terms of number of transitions)
 from the current state to a state satisfying the right-hand side of the Until modality. 
 To build such a formula, we have  to fix the size of the bit vectors and then the maximum value we can represent, therefore   
 the reduction is parameterised by a value $N$   and the bit vectors will be able to encode values from $0$ to $N$ ($N$ will be set by $|V|$ when the reduction is applied over a Kripke structure). 
 Therefore contrary to previous reductions, the correctness of the preprocessing will depend on the size (the number of states) of the structure we will consider. 
  
 As for \MetPNF, we consider 
  a $\QCTL$\ formula $\Phi$ under  negation normal form (NNF) and we  use the $\Flat_2$ transformation, but we  define a new transformation, called  $\ReplaceUW^2(\Psi,N)$, to replace temporal modalities. 

For modalities $\EW$ and $\AW$, we use the same idea as in $\ReplaceUW$ for method \MetPNF. For the Until-based modalities, corresponding to least fixpoints, we  use \emph{bit vectors} of length $\kbv$~\footnote{The value of $\kbv$ will depend on $N$.} instead of a single Boolean value  $\kappa_i$ 
 to encode the truth value of $\EU$ or $\AU$. For  a formula  $\theta_i=\Ex \vfi \Until \psi$, 
 the value $\bar{\kappa_i}$ encodes in binary the  distance  from $x$ to a state satisfying $\psi$ along some path satisfying  $\vfi \Until \psi$, and  for $\theta_i=\All \vfi \Until \psi$, the value $\bar{\kappa_i}$ encodes an overapproximation of the distance before reaching a state satisfying $\psi$ (and verifying $\vfi \Until \psi$) along any path issued from $x$.

In the new reduction we need to compare the values encoded by  bit vectors  with  integer values encoded in binary. These comparisons will be done by propositional formulas. Consider 
 a bit vector $\bar{\kappa_i} = \kappa_i^{\kbv-1} \cdots \kappa_i^0$ and an integer value $d$ encoded in binary with $\kbv$ bits $d^{\kbv-1}\cdots d^0$ where $\kappa_i^0$ and $d^0$ are the least significant digits of the values. In the following we identify 1 with $\top$ and 0 with $\bot$. We define two types of formulas $[\bar{\kappa_i}=d]$ (\ie\ the value encoded by $\bar{\kappa_i}$ equals to $d$) and $[\bar{\kappa_i}<d]$ (\ie\ the value encoded by $\bar{\kappa_i}$ is  less than $d$) as follows:
 \[
 [\bar{\kappa_i}=d] \eqdef \!\!\! \ET_{0\leq j < \kbv} \!\!\! \!\!  \kappa_i^j \equivaut d^j \quad \quad  
 [\bar{\kappa_i}<d] \eqdef  \!\!\!\OU_{0\leq m < \kbv}\!\!\! (\non \kappa_i^m \et d^m \et \!\!\!\ET_{m< j < \kbv} \!\!\!(\kappa_i^j \equivaut d^j) 
 \]
 
Assume ${\displaystyle \Psi = \calQ \:  \exists \kappa_1\ldots \exists \kappa_m. \Big( \Phi_0 \et \ET_{1\leq i \leq m} \AG (\kappa_i \impl   \theta_i)\Big)}$ with the same form as in Proposition~\ref{prop-flat2}, we define 
$\ReplaceUW^2(\Psi,N)$ as: 
\[
\calQ \:  \exists \{\dot{\kappa_1} \ldots \dot{\kappa_m}\} .  \Big( \widetilde{\Phi_0} \et \ET_{1\leq i \leq m} \widetriangle{\AG (\kappa_i \impl   \theta_i)} \Big)
\]
where:
\begin{itemize}
\item $\kbv =\lceil\log(N+1)\rceil$, 
\item  $\dot{\kappa_i}$ is a Boolean proposition (resp.\ a vector of $\kbv$ Boolean propositions) if $\theta_i$ is of the form $\EX \psi$, $\AX \psi$, $\Ex \vfi \WUntil \psi$ or $\All \vfi \WUntil \psi$ (resp.\ if $\theta_i$ is of the form $\Ex \vfi \Until \psi$ or $\All \vfi \Until \psi$), 
\item 
the transformation $\widetriangle{\theta_i}$ is defined as follows:
\begin{align*}
\widetriangle{\AG (\kappa_i \impl \EX \psi)} & \eqdef \; \AG(\kappa_i \impl \EX \:\widetilde{\psi}) \\
\widetriangle{\AG (\kappa_i \impl \AX \psi)} & \eqdef \; \AG(\kappa_i \impl \AX \:\widetilde{\psi}) \\
\widetriangle{\AG (\kappa_i \impl \Ex \vfi \WUntil \psi)} & \eqdef \; \AG(\kappa_i \impl (\widetilde{\psi} \ou (\widetilde{\vfi} \et \EX \:\kappa_i))) \\
\widetriangle{\AG (\kappa_i \impl \All \vfi \WUntil \psi)} & \eqdef \; \AG(\kappa_i \impl (\widetilde{\psi} \ou (\widetilde{\vfi} \et \AX \:\kappa_i))) \\
\widetriangle{\AG (\kappa_i \impl \Ex \vfi \Until \psi)} & \eqdef \;   \AG \big[ ( [\bar{\kappa_i}=0] \impl \:\widetilde{\psi}) \et \\ & \hfill \ET_{1 \leq d < N} ([\bar{\kappa_i}=d] \impl \: (\widetilde{\vfi} \et \EX\: [\bar{\kappa_i}=d-1])) \big] \\
\widetriangle{\AG (\kappa_i \impl \All \vfi \Until \psi)} & \eqdef \;  \AG \big[ ( [\bar{\kappa_i}=0] \impl \:\widetilde{\psi}) \et \\ & \hfill \ET_{1 \leq d < N} ([\bar{\kappa_i}=d] \impl \: (\widetilde{\vfi} \et \AX\: [\bar{\kappa_i}<d])) \big] 
 \end{align*}
 \item $\widetilde{\alpha}$ denotes the formula $\alpha$ where every occurrence of $\kappa_i$ is replaced by the formula $[\bar{\kappa_i}^x < N]$ for every $i$ such that  $\dot{\kappa_i}$ is a bit vector (\ie\ is associated with $\EU$ or $\AU$ in the flattening step), 
 \end{itemize}

And we have the following proposition which specifies that $\ReplaceUW^2(\Flat_2($ $\Phi),N)$ is equivalent to $\Phi$ for every structure whose size is at most $N$:
\begin{proposition}
\label{prop-fbv}
For any \QCTL\ formula $\Phi$, any Kripke structure $\calK=(V,E,\ell)$ with $|V|\leq N$ and any $x \in V$, we have: $\calK, x \sat \Phi$ if and only if  $\calK, x \sat \ReplaceUW^2(\Flat_2(\Phi),N)$.
\end{proposition}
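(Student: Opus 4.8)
The plan is to reduce, via Proposition~\ref{prop-flat2}, to proving that $\Flat_2(\Phi)$ and $\ReplaceUW^2(\Flat_2(\Phi),N)$ are equivalent on every Kripke structure with at most $N$ states. Write $\Flat_2(\Phi)=\calQ\,\exists\kappa_1\ldots\exists\kappa_m.\big(\Phi_0\et\ET_{i}\AG(\kappa_i\impl\theta_i)\big)$. The two formulas share the prefix $\calQ$, so once an interpretation of the $\calQ$-bound propositions and of the propositions of $\calK$ is fixed, it suffices to show that some Boolean $\kappa$-labelling satisfies the matrix of $\Flat_2(\Phi)$ iff some $\dot\kappa$-labelling satisfies the matrix of $\ReplaceUW^2(\Flat_2(\Phi),N)$. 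Since the flattening introduces the $\kappa_i$ innermost-first, we may assume they are ordered so that each $\theta_i$ mentions only $\kappa_j$ with $j<i$; I would therefore run an outer induction on $i$, so that when the conjunct for $\kappa_i$ is treated, the formula $[\bar{\kappa_j}<N]$ that replaces each $\kappa_j$ (for $j<i$) inside $\widetilde{\vfi},\widetilde{\psi}$ has already been shown to define the same set of states as $\kappa_j$. Throughout, I use that each $\kappa_i$ occurs only positively in $\Phi_0$ (the flattening replaces positive-polarity subformulas), so that enlarging a $\kappa_i$-labelling preserves satisfaction of $\Phi_0$ and of $\widetilde{\Phi_0}$.

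The conjuncts whose $\dot\kappa_i$ stays Boolean --- those for $\theta_i\in\{\EX\psi,\AX\psi,\Ex\vfi\WUntil\psi,\All\vfi\WUntil\psi\}$ --- are handled exactly as in the proof of Proposition~\ref{prop-pnf}: for $\EX$ and $\AX$ the rewritten conjunct is literally $\AG(\kappa_i\impl\theta_i)$ up to the already-justified replacements inside $\psi$, and for the weak-until cases the rewritten conjunct is the greatest-fixpoint unfolding $\AG(\kappa_i\impl(\widetilde{\psi}\ou(\widetilde{\vfi}\et\EX\kappa_i)))$, which is satisfied exactly when the $\kappa_i$-states are contained in (one direction) or can be taken to equal (the other direction, by monotonicity of $\Phi_0$) the set of states satisfying $\Ex\vfi\WUntil\psi$; the $\All$ case is symmetric.

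The core of the proof, and the only place the hypothesis $|V|\le N$ is used, is the Until-based conjuncts. Take $\theta_i=\Ex\vfi\Until\psi$. For the forward direction, assume a $\kappa$-labelling satisfying the $\Flat_2$-matrix; then every $\kappa_i$-state $y$ satisfies $\Ex\vfi\Until\psi$, so the least length $d(y)$ of a $\vfi$-respecting prefix from $y$ to a $\psi$-state is well defined, and a standard ``distinct states along a shortest witness'' argument gives $d(y)\le|V|-1\le N-1$, hence $d(y)$ fits in $\kbv=\lceil\log(N+1)\rceil$ bits. Setting $\bar{\kappa_i}(y)=d(y)$ at $\kappa_i$-states and $\bar{\kappa_i}(y)=N$ elsewhere, one checks directly that $[\bar{\kappa_i}=0]\impl\widetilde{\psi}$ and $[\bar{\kappa_i}=d]\impl(\widetilde{\vfi}\et\EX[\bar{\kappa_i}=d-1])$ hold at every reachable state, and that $[\bar{\kappa_i}<N]$ holds exactly at the $\kappa_i$-states, so $\widetilde{\Phi_0}$ holds. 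For the backward direction, assume a $\dot\kappa$-labelling satisfying the $\ReplaceUW^2$-matrix; by induction on $d=\bar{\kappa_i}(y)$ --- using that $\EX[\bar{\kappa_i}=d-1]$ yields a reachable successor governed by the same $\AG$-constraint, and that $\widetilde{\vfi},\widetilde{\psi}$ already coincide with $\vfi,\psi$ --- every reachable $y$ with $\bar{\kappa_i}(y)<N$ satisfies $\Ex\vfi\Until\psi$; taking $\kappa_i$ to be this set satisfies $\AG(\kappa_i\impl\theta_i)$, and $\Phi_0$ holds because it is $\widetilde{\Phi_0}$ once the sets defined by $[\bar{\kappa_i}<N]$ and $\kappa_i$ are identified. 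The case $\theta_i=\All\vfi\Until\psi$ is identical once $d(y)$ is replaced by the rank $\mathrm{rank}(y)$ equal to $0$ if $y\sat\psi$ and $1+\max\{\mathrm{rank}(y'):(y,y')\in E\}$ otherwise --- finite precisely on states satisfying $\All\vfi\Until\psi$, again bounded by $|V|-1$, and matched by the over-approximating constraint $[\bar{\kappa_i}=d]\impl(\widetilde{\vfi}\et\AX[\bar{\kappa_i}<d])$.

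I expect the main obstacle to be bookkeeping rather than any single hard idea: making the outer induction on the flattening order watertight (no circular dependency among the $\kappa_i$), keeping the polarity/monotonicity uses for $\Phi_0$ straight, and confirming that all intermediate distances and ranks are genuinely $\le N-1$ --- that bound being the precise content of the size hypothesis and the sole reason $\ReplaceUW^2$ is parameterised by $N$.
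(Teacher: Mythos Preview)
Your approach matches the paper's: encode least-fixpoint witnesses by distances (for $\EU$) and ranks (for $\AU$), both bounded by $|V|-1\le N-1$, and use positivity of the $\kappa_i$ in the NNF flattening to pass between the two matrices. You supply considerably more scaffolding---the reduction via Proposition~\ref{prop-flat2}, the induction on the flattening index, the explicit treatment of the $\WUntil$ conjuncts---than the paper, whose proof is essentially a sketch of the $\EU$ and $\AU$ clauses only.

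One small slip in your forward direction for $\EU$ (and symmetrically for $\AU$): setting $\bar{\kappa_i}(y)=d(y)$ only at $\kappa_i$-states and $N$ elsewhere does \emph{not} validate the conjunct $[\bar{\kappa_i}=d]\impl(\widetilde{\vfi}\et\EX[\bar{\kappa_i}=d-1])$, because the next state along a shortest $\vfi$-witness from a $\kappa_i$-state need not itself be a $\kappa_i$-state, so its $\bar{\kappa_i}$-value would be $N$ rather than $d-1$. The repair is exactly the tool you already invoke: set $\bar{\kappa_i}(y)=d(y)$ at \emph{every} reachable state satisfying $\Ex\vfi\Until\psi$ (and $N$ elsewhere); then the $\AG$-constraints hold by construction, $[\bar{\kappa_i}<N]$ defines a superset of the original $\kappa_i$-set, and the monotonicity of $\Phi_0$ and of the later $\theta_j$ in each $\kappa_i$ carries $\widetilde{\Phi_0}$ and the remaining conjuncts through.
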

\begin{proof}
Consider the formula $\Ex \vfi \Until \psi$ and some bit vector $\bar{\kappa_i}$ associated with it.
If $\calK, x \sat \Ex \vfi \Until \psi$, then there exists a path $\rho$ from $x$ satisfying $\vfi \Until \psi$. The length of the prefix of $\rho$ from $x$ to the state $y$ where $\psi$ is true is at most $|V|-1$ (we can assume the path to be simple). Therefore we can fix the values of $\bar{\kappa_i}$ for all states along the prefix to the distance to $y$ and we have the result. And for the states that do not satisfy $\vfi \Until \psi$, we can set $\bar{\kappa_i}$ to  the value $N$. \\
For the other direction, the definition of a $\bar{\kappa_i}$  ensures that if the value $v$ encoded by $\bar{\kappa_i}$ in a given state $x$  is less than $N$, then $v$ corresponds actually to the  distance from $x$ to  a state satisfying $\psi$ along a path satisfying $\vfi \Until \psi$ (it decreases down to 0): this distance is maybe not the length of a shortest path but it is sufficient to deduce that $\Ex \vfi \Until \psi$ holds true at $x$. 
 And indeed when the truth value of $\Ex \vfi \Until \psi$ in a state $x$ is needed, we interpret  $[\bar{\kappa_i}^x < N]$ at $x$. 

For $\All \vfi \Until \psi$, we can see from the definition that if the value $v$ encoded by $\bar{\kappa_i}$ in  $x$  is less than $N$, then all  successors of $x$ have an encoded value less than  $v$, which ensures that $\All \vfi \Until \psi$ holds true at $x$. 

Note that when the value encoded by $\bar{\kappa_i}$ is greater than $N$, then nothing is ensured and in both case, the subformula will not be assumed to be true. 
\hfill$\Box$
\end{proof}

Finally we can see that the transformation $\ReplaceUW^2(-,N)$ increases the size of the formula by a factor of $N\cdot \lceil\log(N+1)\rceil$. 
And then we have:

\begin{maboite}
\textbf{Reduction \MetFBV:} Given a Kripke structure $\calK$, a state $x$ in $\calK$ and a \QCTL\ formula $\Phi$, the reduction \MetFBV\ is defined as the prenex \QBF\ formula 
\stackon[-8pt]{$ \ReplaceUW^2(\Flat_2(\Phi),|V|)$}{\vstretch{1.5}{\hstretch{16}{\widetriangle{\phantom{\;}}}}}$^{~^{\scriptstyle \!x}}$.
Its correction is a direct consequence of Proposition~\ref{prop-fbv} and Theorem~\ref{theo-UU}. The size of the \QBF\ formula is in $O(|V| \cdot |\calK| \cdot \lceil\log(|V|+1)\rceil  \cdot |\Phi|)$.
\end{maboite}

This provides another \PSPACE\ algorithm but  the size of the resulting \QBF~ formula  is larger than that obtained by the  previous method. Finally one can notice that this approach would allow us  to easily adapt the algorithm to \emph{bounded} model-checking: instead of considering values from $1$ to $|V|$ for $d$ in the definition of Until modalities, one could restrict the range to a smaller interval, to get a smaller \QBF\ formula to check. 

\subsection{Dealing with $\exists^1$ and  $\forall^1$.}

\label{sec-ea1}

The quantifiers $\exists^1$ and $\forall^1$ are very useful in many specifications. It can be interesting to develop ad-hoc algorithms  in order to improve the generated \QBF\ formulas and to be able to choose the encoding of these operators. 
Formally we defined $\exists^1p.\vfi$ as $\exists p.((\Ex_{=1}\F p) \et \vfi)$. In the following, we use the abbreviation $\uniq(p)$ to denote $\Ex_{=1}\F p$ and we will see several methods to deal with it during the translation into \QBF.

There are three possible encodings of $\uniq(p)$:
\begin{enumerate}
\item by using its \QCTL\ definition ($\Ex_{=1}\F p$): and translate this formula with the rules described above for the different methods.
\item by an explicit disjunction in the \QBF \ formula: given a state $x$, $\uniq(p)$ is equivalent to:
${\displaystyle  \Big( \OU_{(x,y) \in E^*} (p^y \et \ET_{z \not= y} \non p^z)\Big)}$.
\item by Bit vectors: the quantifier ($\exists$ or $\forall$) associated with  $p$  introduces  a  \emph{bit vector} $\bar{p}$ of size $\lceil\log(|V|+1)\rceil$ to store the \emph{number} of the state selected by the quantifier: in the \QBF\ formula, verifying that a state $x$ is labelled by $p$ consists in verifying that the number of $x$ equals to the value $\bar{p}$ (which can be encoded as a propositional formula). And $\uniq(p)$ consists in verifying that there exists one reachable state whose number is $\bar{p}$. Note that  this method reduces the number of quantified propositions (instead of having $|V|$ propositions as in the two first methods, we have only $\lceil\log(|V|+1)\rceil$ propositions with this encoding). 
\end{enumerate}

Note that the encoding may provide a formula that is not anymore in PNF. To avoid this, we can gather all subformulas $\uniq(-)$ in the main part of the \QCTL\ formula in order to keep the prenex form. Indeed we can first observe the following equivalences when $\vfi$ does not depend on $p$:
\begin{align*}
\vfi \:\et \exists p.\psi  & \;\equiv \;  \exists p. (\vfi \:\et\: \psi)   \quad \quad \vfi \:\et\: \forall p.\psi  \;\equiv \;   \forall p. (\vfi \:\et\: \psi)   \\
\vfi \:\impl\: \exists p.\psi  & \;\equiv \;   \exists p. (\vfi \:\impl\: \psi)   \quad \quad \vfi \:\impl\: \forall p.\psi  \;\equiv \;   \forall p. (\vfi \:\impl\: \psi)   \\
\end{align*}

Now consider a Prenex \QCTL\ formula $\calQ.\Phi$ where $\calQ$ contains  $\exists^1$ or $\forall^1$ quantifications for $k$ propositions $p_1,\ldots,p_k$. Let $P_{\exists^1} \subseteq \{1,\ldots,k\}$ be the set of indexes such that  $i\in P_{\exists^1}$ iff $p_i$ is introduced by some quantifier $\exists^1$ in $\Phi$. Let $P_{\forall^1}$ be $\{1,\ldots,k\} \setminus P_{\exists^1}$.
The previous equivalences allow us to get a formula of the form:
\[
\Phi' \eqdef \widetilde{\calQ} . (\uniq(p_1) \:op_1\: (\uniq(p_2) \:op_2\: (\; \ldots \;  \uniq(p_k) \:op_k \: \Phi)))
\]
where (1) $\widetilde{\calQ}$ is $\calQ$ where $\exists^1$ (resp.\ $\forall^1$) is replaced by  $\exists$ (resp.\  $\forall$), and (2) $op_i$ is $\et$ (resp.\ $\impl$) if $i\in P_{\exists^1}$ (resp.\  $i \in P_{\forall^1}$).
And it remains to see that the formula $\Phi'$  is equivalent to:
\[
\Phi'' \eqdef \calQ . \Big( (\ET_{i \in P_{\forall^1}} \uniq(p_i)) \:\impl\: \Big( (\ET_{i\in P_{\exists^1}} \uniq(p_i)) \et \Phi\Big)\Big)
\]
The correctness of this equivalence is based on  Proposition~\ref{prop-ea1} in~\ref{app-ea1}.

Finally, the three previous encodings  can be used to get a \QBF\ formula. Note that the last two (explicit disjunction and bit vectors) will then provide a formula in prenex normal form. Moreover the first and third encodings use the structure $\calK$.

Note also that we could also consider a variant with $\exists^{\leq 1}$ where the associated proposition  cannot label more than one state. The same encodings described above can be done.


\section{Experimental results}

\label{sec-experiments}

In this section, we consider four examples to evaluate and compare the different reductions.   
 These problems  can be solved  efficiently without a \QCTL\ model-checker, but they provide valuable insights on the performance of the reduction strategies, and the properties to be checked cannot be expressed  with classical temporal logics. In every case we consider graphs that can easily be scaled up by tweaking a few parameters.

For these experiments, we use our prototype \qctlmc~\footnote{Our tool is available online \url{https://www.irif.fr/~francoisl/qctlmc.html}, it implements the different reductions.}
 to translate model-checking instances for \QCTL\ into \QBF\ instances. The reductions produce either a \QBF\ formula in QCIR-G14 format~\footnote{See \url{http://www.qbflib.org}} or in the format used by  the tool \Z3. We considered the following \QBF\ solvers~\footnote{Many   \QBF\ solvers handle only prenex CNF formulas, but here we need non CNF formula and we didn't want to add  extra formula rewritings, this explains our choice of solvers.}:
\begin{itemize}
\item \Z3: it is a powerful  SMT solver~\cite{MouraB08} which also handles \QBF \ instances. There are many features in \Z3 but we only use the restricted part for \QBF. Its admits any kind of \QBF\ formula in an adhoc format (called \Z3 format here). We used \Z3 \ 4.8.7 for the tests.
\item \qfm: it is a \QBF \ solver~\cite{qfm} with counterexample guided refinement (based on sat-solver cadical~\cite{Biere-SAT-Race-2019-solvers} or minisat\cite{EenS03}), it allows us to deal with general \QBF\ formulas in QCIR format. 
\item \qfun~\footnote{\url{http://sat.inesc-id.pt/~mikolas/sw/qfun/}}:  it is a \QBF \ solver based on Recursive Abstraction Refinement  and machine learning  \cite{qfun}. It requires prenex formulas in QCIR format.
\item \cqesto~\footnote{\url{http://sat.inesc-id.pt/~mikolas/sw/cqesto/}}: it is a QBF solver based on clause selection~\cite{cqesto}, it requires a prenex formula in QCIR format. 
\end{itemize}

The solvers \cqesto\ and \qfun \ require prenex formulas and then can only be used with the reductions \MetPNF\  and \MetFBV. There are none of these restrictions for   \Z3\ and \qfm. All results are presented in Subsection~\ref{sec-results}.

\subsection{Reset property}

\label{sec-scc}

We  consider the \emph{reset property}: the existence of a set of (at most) $m$ states such that from any reachable state it is possible to reach at least one of the selected states. For this we can use $\Lambda_{m} \eqdef {\displaystyle \exists^1 p_1 \ldots \exists^1 p_m . \big(\AG (\EF \OU_{1\leq i \leq m} p_i)\big)}$ which  selects $m$ states by using the propositions $p_i$s (the use of quantifier $\exists^1$ ensures that at most $m$ states are selected).  

Now given two parameters $n,k \in \Nat$, we define the Kripke structure  $\calV_{n,k}=(V,E,\ell)$ that contains a root $r$ and $n$ different cycles of length $k$ (and no atomic proposition). It is depicted at Figure~\ref{fig-scc}.
We then clearly have $\calV_{n,k} \sat \Lambda_{m}$ iff $m\geq n$. The main characteristics of this example are: a simple temporal formula, the use of the $\exists^1$ operator and a Kripke structure with a low branching degree.

\begin{figure}[t]
\centering
\begin{tikzpicture}[inner sep=0pt,scale=0.8]
\draw (-1.8,2.1) node[moyrond,vert] (r) {$r$} ;
\draw (0,4) node[minirond,vert] (v00) {\footnotesize $q_{1,1}$} ;
\draw (1.7,4) node (v01) {\ldots} ;
\draw (3.4,4) node[minirond,vert] (v02) {\footnotesize $q_{1,k}$} ;
\draw (0,2.5) node[minirond,vert] (v10) {\footnotesize $q_{2,1}$} ;
\draw (1.7,2.5) node (v11) {\ldots} ;
\draw (3.4,2.5) node[minirond,vert]   (v12) {\footnotesize $q_{2,k}$}  ;
\draw (0,0) node[minirond,vert] (v30) {\footnotesize $q_{n,1}$} ;
\draw (1.7,0) node (v31) {\ldots} ;
\draw (3.4,0) node[minirond,vert] (v32) {\footnotesize $q_{n,k}$} ;
\draw (0,1.3) node (v20) {} ;
\draw (1.5,1.3) node (v21) {\ldots} ;

\draw[-latex'](r) -- (v00) ;
\draw[-latex'](r) -- (v10) ;
\draw[-latex'](r) -- (v20) ;
\draw[-latex'](r) -- (v30) ;

\draw[-latex'](v00) -- (v01) ;
\draw[-latex'](v01) -- (v02) ;
\draw[-latex'](v10) -- (v11) ;
\draw[-latex'](v11) -- (v12) ;
\draw[-latex'](v30) -- (v31) ;
\draw[-latex'](v31) -- (v32) ;
\draw[-latex',out=20,in=160] (v02) to  (v00);
\draw[-latex',out=20,in=160] (v12) to  (v10);
\draw[-latex',out=-20,in=-160] (v32) to  (v30);

\draw (2,-1) node  (titre) { $\calV_{n,k}$} ;

\begin{scope}[shift={(5,0)},inner sep=0pt]

\draw (0,4) node[minirond,vert] (v00) {\footnotesize $q_{1,1}$} ;
\draw (2,4) node[minirond,vert] (v01) {\footnotesize $q_{1,2}$} ;
\draw (4,4) node (v02) { $\;\ldots\;$} ;

\draw (6,4) node[minirond,vert] (v04) {\footnotesize $q_{1,k}$} ;
\draw (0,2.5) node[minirond,vert] (v10) {\footnotesize $q_{2,1}$} ;
\draw (2,2.5) node[minirond,vert] (v11) {\footnotesize $q_{2,2}$} ;
\draw (4,2.5) node  (v12) { $\;\ldots\;$} ;
\draw (6,2.5) node[minirond,vert] (v14) {\footnotesize $q_{2,k}$} ;
\draw (0,0) node[minirond,vert] (v30) {\footnotesize $q_{n,1}$} ;
\draw (2,0) node[minirond,vert] (v31) {\footnotesize $q_{n,2}$} ;
\draw (4,0) node (v32) { $\;\ldots\;$} ;
\draw (6,0) node[minirond,vert] (v34) {\footnotesize $q_{n,k}$} ;
\draw (0,1.3) node  (v20) { $\;\ldots\;$} ;
\draw (2,1.3) node  (v21) { $\;\ldots\;$} ;
\draw (4,1.3) node  (v22) { $\;\ldots\;$} ;
\draw (6,1.3) node  (v24) { $\;\ldots\;$} ;

\draw (7,4.7) node (aux11) { } ;
\draw (7,4) node (aux12) { } ;
\draw (7.5,4) node (aux02b) {$q_{i,1}$ } ;
\draw (7,3.3) node (aux13) { } ;
\draw[-latex',dashed] (v04) to  (aux11);
\draw[-latex',dashed] (v04) to  (aux12);
\draw[-latex',dashed] (v04) to  (aux13);

\draw (7,3.2) node (aux01) { } ;
\draw (7,2.5) node (aux02) {} ;
\draw (7.5,2.5) node (aux02b) {$q_{i,1}$ } ;
\draw (7,1.8) node (aux03) { } ;
\draw[-latex',dashed] (v14) to  (aux01);
\draw[-latex',dashed] (v14) to  (aux02);
\draw[-latex',dashed] (v14) to  (aux03);

\draw (7,0.7) node (aux11) { } ;
\draw (7,0) node (aux12) { } ;
\draw (7.5,0) node (aux02b) {$q_{i,1}$ } ;
\draw (7,-0.7) node (aux13) { } ;
\draw[-latex',dashed] (v34) to  (aux11);
\draw[-latex',dashed] (v34) to  (aux12);
\draw[-latex',dashed] (v34) to  (aux13);

\draw[-latex](v00) -- (v01) ;
\draw[-latex](v00) -- (v11) ;
\draw[-latex](v00) -- (v31) ;

\draw[-latex](v01) -- (v02) ;
\draw[-latex](v01) -- (v12) ;
\draw[-latex](v01) -- (v22) ;
\draw[-latex](v01) -- (v32) ;

\draw[-latex](v10) -- (v11) ;
\draw[-latex](v10) -- (v01) ;
\draw[-latex](v10) -- (v31) ;

\draw[-latex](v30) -- (v11) ;
\draw[-latex](v30) -- (v01) ;
\draw[-latex](v30) -- (v31) ;

\draw[-latex](v11) -- (v12) ;
\draw[-latex](v12) -- (v14) ;
\draw[-latex](v31) -- (v32) ;
\draw[-latex](v32) -- (v34) ;

\draw[-latex](v02) -- (v04) ;

\draw (4,-1) node  (titre) { $\calK_{n,k}$} ;
\end{scope}

\end{tikzpicture}
\vspace{-0.2cm}
\caption{Structures for  the reset property and the resources distribution.}
\label{fig-scc}\label{fig-resources}
\end{figure}
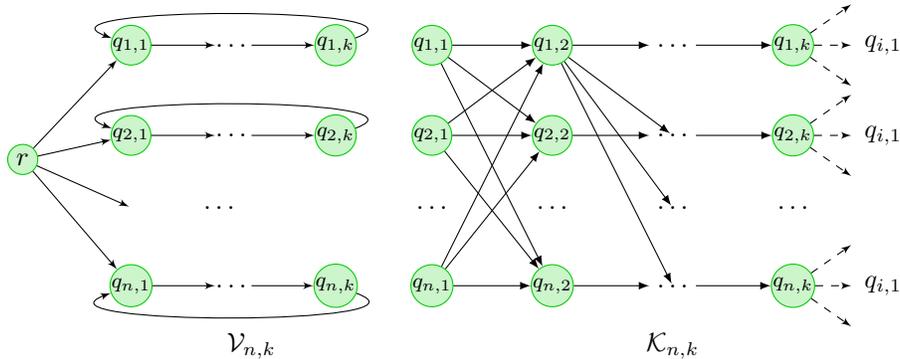

\subsection{$k$-connectivity}

Here, we consider an undirected graph, and we want to check whether there exist (at least) $k$ internally disjoint paths~\footnote{Two paths $src \leftrightarrow r_1 \leftrightarrow \ldots \leftrightarrow r_k \leftrightarrow dest$  and $src \leftrightarrow r'_1 \leftrightarrow \ldots \leftrightarrow r'_{k'} \leftrightarrow dest$  are internally disjoint iff $r_i  \not= r'_{j}$ for any $1 \leq i \leq k$ and $1 \leq j \leq k'$. And note that with this definition, if there is an edge $(x,y)$, there exist $k$ internally disjoint paths from $x$ to $y$ for any $k$.} from a vertex $x$ to some vertex $y$.
A classical result in graph theory due to Menger   ensures that, given two vertices $x$ and $y$ in a graph $G$, the minimum number of vertices whose deletion makes that there is no more paths between $x$ and $y$  is equal to the maximum number of internally disjoint paths between these two vertices~\cite{BondyM08}.

We can encode these two ideas with the following \QCTL\ formulas (interpreted in $x$):
\begin{align}
\Phi_k \;&\eqdef\; \exists p_1 \ldots \exists p_{k-1}. \Big( \ET_{1\leq i < k} \EX \big(\Ex (p_i \et \ET_{j\not= i} \non p_j) \:\Until\: y\big) \;\et\; \EX\: \Ex (\ET_{1\leq i <k} \non p_i) \:\Until\: y \Big)  \label{kconF1}\\
\Psi_k \;&\eqdef\; \forall^1 p_1 \ldots \forall^1 p_{k-1}. \: \EX  \: \Big( \Ex \big( \ET_{1\leq i < k} \non p_i\big) \:\Until\: y\Big) \label{kconF2}
\end{align}

$\Phi_k$ uses the labelling by the $p_i$'s to mark the internal vertices of $k$ paths between the current position and the vertex $y$. The modality $\EX$ is used to consider only the intermediate states (and not the starting state). The formula $\Psi_k$ proceeds differently: the idea is to mark exactly $k-1$ states with $p_1,\ldots,p_{k-1}$ and to verify that there still exists at least one path leading to $y$ without going through the states labelled by some $p_i$. 
By Menger's Theorem, we know that these formulas are equivalent over undirected graphs.

We interpret these formulas over Kripke structures $\calS_{n,m}$ with $n\geq m$ (see Figure~\ref{fig-kconnect}) which correspond to two kinds of grids $n\times n$ connected by $m$ edges (these edges are of the form $(q_{i,n},r_{1,i})$ or $(q_{n,i},r_{i,1})$.
The initial state is $q_{1,1}$ and when evaluating $\Phi_k$ or $\Psi_k$ we assume the state $r_{n,n}$ to be labelled by $y$.  In this context, we clearly have that $\Phi_k$ and $\Psi_k$ hold for true at $q_{1,1}$ iff $k\leq m$.

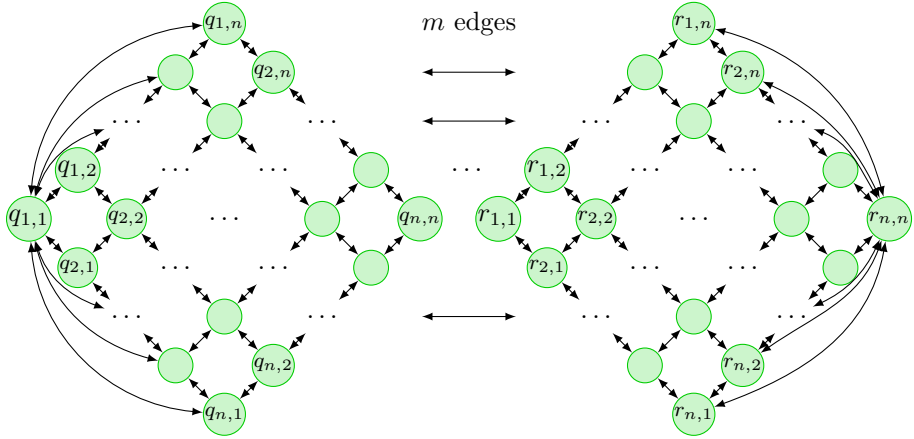
\begin{figure}[h]
\centering
\begin{tikzpicture}[inner sep=0pt,scale=1.3]
\draw (0,2) node[minirond,vert] (q00) {$q_{1,1}$} ;
\draw (0.5,2.5) node[minirond,vert] (q01) { $q_{1,2}$} ;
\draw (1,3) node[minirond,white] (q02) {\textcolor{black}{$\;\ldots\;$}} ;
\draw (1.5,3.5) node[minirond,vert] (q03) {\footnotesize {$\quad\;\;$}} ;
\draw (2,4) node[minirond,vert] (q04) {\footnotesize $q_{1,n}$} ;
\draw (0.5,1.5) node[minirond,vert] (q10) {\footnotesize $q_{2,1}$} ;
\draw (1,2) node[minirond,vert] (q11) {\footnotesize $q_{2,2}$} ;
\draw (1.5,2.5) node[minirond,white] (q12) {\textcolor{black}{$\;\ldots\;$}} ;
\draw (2,3) node[minirond,vert] (q13) {\footnotesize {$\quad\;\;$}} ;
\draw (2.5,3.5) node[minirond,vert] (q14) {\footnotesize $q_{2,n}$} ;
\draw (1,1) node[minirond,white] (q20) {\textcolor{black}{$\;\ldots\;$}} ;
\draw (1.5,1.5) node[minirond,white] (q21) {\textcolor{black}{$\;\ldots\;$}} ;
\draw (2,2) node (q22) { $\;\ldots\;$} ;
\draw (2.5,2.5) node[minirond,white] (q23) {\textcolor{black}{$\;\ldots\;$}} ;
\draw (3,3) node[minirond,white] (q24) {\textcolor{black}{$\;\ldots\;$}} ;
\draw (1.5,0.5) node[minirond,vert] (q30) {\footnotesize $\quad\;\;$} ;
\draw (2,1) node[minirond,vert] (q31) {\footnotesize $\quad\;\;$} ;
\draw (2.5,1.5) node[minirond,white] (q32) {\textcolor{black}{$\;\ldots\;$}} ;
\draw (3,2) node[minirond,vert] (q33) {\footnotesize {$\quad\;\;$}} ;
\draw (3.5,2.5) node[minirond,vert] (q34) {\footnotesize $\quad\;\;$} ;
\draw (2,0) node[minirond,vert] (q40) {\footnotesize $q_{n,1}$} ;
\draw (2.5,.5) node[minirond,vert] (q41) {\footnotesize $q_{n,2}$} ;
\draw (3,1) node[minirond,white] (q42) {\textcolor{black}{$\;\ldots\;$}} ;
\draw (3.5,1.5) node[minirond,vert] (q43) {\footnotesize {$\quad\;\;$}} ;
\draw (4,2) node[minirond,vert] (q44) {\footnotesize $q_{n,n}$} ;

\draw[>=latex,<->](q00) -- (q01) ;
\draw[>=latex,<->](q01) -- (q02) ;
\draw[>=latex,<->](q02) -- (q03) ;
\draw[>=latex,<->](q03) -- (q04) ;
\draw[>=latex,<->](q10) -- (q11) ;
\draw[>=latex,<->](q11) -- (q12) ;
\draw[>=latex,<->](q12) -- (q13) ;
\draw[>=latex,<->](q13) -- (q14) ;
\draw[>=latex,<->](q30) -- (q31) ;
\draw[>=latex,<->](q31) -- (q32) ;
\draw[>=latex,<->](q32) -- (q33) ;
\draw[>=latex,<->](q33) -- (q34) ;
\draw[>=latex,<->](q40) -- (q41) ;
\draw[>=latex,<->](q41) -- (q42) ;
\draw[>=latex,<->](q42) -- (q43) ;
\draw[>=latex,<->](q43) -- (q44) ;
\draw[>=latex,<->](q00) -- (q10) ;
\draw[>=latex,<->](q10) -- (q20) ;
\draw[>=latex,<->](q20) -- (q30) ;
\draw[>=latex,<->](q30) -- (q40) ;
\draw[>=latex,<->](q01) -- (q11) ;
\draw[>=latex,<->](q11) -- (q21) ;
\draw[>=latex,<->](q21) -- (q31) ;
\draw[>=latex,<->](q31) -- (q41) ;
\draw[>=latex,<->](q03) -- (q13) ;
\draw[>=latex,<->](q13) -- (q23) ;
\draw[>=latex,<->](q23) -- (q33) ;
\draw[>=latex,<->](q33) -- (q43) ;
\draw[>=latex,<->](q04) -- (q14) ;
\draw[>=latex,<->](q14) -- (q24) ;
\draw[>=latex,<->](q24) -- (q34) ;
\draw[>=latex,<->](q34) -- (q44) ;
\draw[>=latex,<->,out=73,in=-160] (q00) to  (q02);
\draw[>=latex,<->,out=75,in=-165] (q00) to  (q03);
\draw[>=latex,<->,out=85,in=-175] (q00) to  (q04);

\draw[>=latex,<->,out=-73,in=160] (q00) to  (q20);
\draw[>=latex,<->,out=-75,in=165] (q00) to  (q30);
\draw[>=latex,<->,out=-85,in=175] (q00) to  (q40);

\begin{scope}[shift={(4.8,0)}]
\draw (0,2) node[minirond,vert] (r00) {$r_{1,1}$} ;
\draw (0.5,2.5) node[minirond,vert] (r01) { $r_{1,2}$} ;
\draw (1,3) node[minirond,white] (r02) {\textcolor{black}{$\;\ldots\;$}} ;
\draw (1.5,3.5) node[minirond,vert] (r03) {\footnotesize {$\quad\;\;$}} ;
\draw (2,4) node[minirond,vert] (r04) {\footnotesize $r_{1,n}$} ;
\draw (0.5,1.5) node[minirond,vert] (r10) {\footnotesize $r_{2,1}$} ;
\draw (1,2) node[minirond,vert] (r11) {\footnotesize $r_{2,2}$} ;
\draw (1.5,2.5) node[minirond,white] (r12) {\textcolor{black}{$\;\ldots\;$}} ;
\draw (2,3) node[minirond,vert] (r13) {\footnotesize {$\quad\;\;$}} ;
\draw (2.5,3.5) node[minirond,vert] (r14) {\footnotesize $r_{2,n}$} ;
\draw (1,1) node[minirond,white] (r20) {\textcolor{black}{$\;\ldots\;$}} ;
\draw (1.5,1.5) node[minirond,white] (r21) {\textcolor{black}{$\;\ldots\;$}} ;
\draw (2,2) node (r22) { $\;\ldots\;$} ;
\draw (2.5,2.5) node[minirond,white] (r23) {\textcolor{black}{$\;\ldots\;$}} ;
\draw (3,3) node[minirond,white] (r24) {\textcolor{black}{$\;\ldots\;$}} ;
\draw (1.5,0.5) node[minirond,vert] (r30) {\footnotesize $\quad\;\;$} ;
\draw (2,1) node[minirond,vert] (r31) {\footnotesize $\quad\;\;$} ;
\draw (2.5,1.5) node[minirond,white] (r32) {\textcolor{black}{$\;\ldots\;$}} ;
\draw (3,2) node[minirond,vert] (r33) {\footnotesize {$\quad\;\;$}} ;
\draw (3.5,2.5) node[minirond,vert] (r34) {\footnotesize $\quad\;\;$} ;
\draw (2,0) node[minirond,vert] (r40) {\footnotesize $r_{n,1}$} ;
\draw (2.5,.5) node[minirond,vert] (r41) {\footnotesize $r_{n,2}$} ;
\draw (3,1) node[minirond,white] (r42) {\textcolor{black}{$\;\ldots\;$}} ;
\draw (3.5,1.5) node[minirond,vert] (r43) {\footnotesize {$\quad\;\;$}} ;
\draw (4,2) node[minirond,vert] (r44) {\footnotesize $r_{n,n}$} ;

\draw[>=latex,<->](r00) -- (r01) ;
\draw[>=latex,<->](r01) -- (r02) ;
\draw[>=latex,<->](r02) -- (r03) ;
\draw[>=latex,<->](r03) -- (r04) ;
\draw[>=latex,<->](r10) -- (r11) ;
\draw[>=latex,<->](r11) -- (r12) ;
\draw[>=latex,<->](r12) -- (r13) ;
\draw[>=latex,<->](r13) -- (r14) ;
\draw[>=latex,<->](r30) -- (r31) ;
\draw[>=latex,<->](r31) -- (r32) ;
\draw[>=latex,<->](r32) -- (r33) ;
\draw[>=latex,<->](r33) -- (r34) ;
\draw[>=latex,<->](r40) -- (r41) ;
\draw[>=latex,<->](r41) -- (r42) ;
\draw[>=latex,<->](r42) -- (r43) ;
\draw[>=latex,<->](r43) -- (r44) ;
\draw[>=latex,<->](r00) -- (r10) ;
\draw[>=latex,<->](r10) -- (r20) ;
\draw[>=latex,<->](r20) -- (r30) ;
\draw[>=latex,<->](r30) -- (r40) ;
\draw[>=latex,<->](r01) -- (r11) ;
\draw[>=latex,<->](r11) -- (r21) ;
\draw[>=latex,<->](r21) -- (r31) ;
\draw[>=latex,<->](r31) -- (r41) ;
\draw[>=latex,<->](r03) -- (r13) ;
\draw[>=latex,<->](r13) -- (r23) ;
\draw[>=latex,<->](r23) -- (r33) ;
\draw[>=latex,<->](r33) -- (r43) ;
\draw[>=latex,<->](r04) -- (r14) ;
\draw[>=latex,<->](r14) -- (r24) ;
\draw[>=latex,<->](r24) -- (r34) ;
\draw[>=latex,<->](r34) -- (r44) ;
\draw[>=latex,<->,out=-15,in=105] (r04) to  (r44);
\draw[>=latex,<->,out=-25,in=115] (r14) to  (r44);
\draw[>=latex,<->,out=-20,in=120] (r24) to  (r44);
\draw[>=latex,<->,out=20,in=-100] (r40) to  (r44);
\draw[>=latex,<->,out=35,in=-110] (r41) to  (r44);
\draw[>=latex,<->,out=25,in=-115] (r42) to  (r44);
\end{scope}

\draw (4,3.5) node (aux1) {} ;
\draw (5,3.5) node (aux1p) {} ;
\draw (4,3) node (aux2) {} ;
\draw (5,3) node (aux2p) {} ;
\draw (4,1) node (aux3) {} ;
\draw (5,1) node (aux3p) {} ;
\draw[>=latex,<->](aux1) -- (aux1p) ;
\draw[>=latex,<->](aux2) -- (aux2p) ;
\draw[>=latex,<->](aux3) -- (aux3p) ;
\draw (4.5,2.5) node (aux1) {\ldots} ;

\draw (4.5,4) node (aux1) {$m$ edges} ;

\end{tikzpicture}
\caption{Structure $\calS_{n,m}$ for the $k$-connectivity problem.}
\label{fig-kconnect}
\end{figure}

The first lesson of this example is that $\Phi_k$ is much more difficult to verify than $\Psi_k$: The number of temporal modalities is the main explanation. The way in which the formula is written is  is of great importance for the successful handling of even small examples.

%
%

\subsection{Nim game}

Nim game is a turn-based two-player game. A configuration is a set of heaps of objects and a boolean value indicating whose turn it is. At each turn, a player has to choose one non-empty heap and remove at least one object from it. The aim of each player is to remove the last object. Given a configuration $c$ and a Player-$J$ with $J\in\{1,2\}$, we can build a finite Kripke structure $\calS_J$, where $x_c$ is a state corresponding to the configuration $c$ ; and  use a \QCTL\ formula $\Phiwin^J$ such that $\calS, x_c \sat \Phiwin^J$ iff Player-$J$ has a wining strategy from $c$. Note that there is a simple and well-known criterion over the numbers of objects in each heap to decide who has a  winning strategy, but we consider this problem just because it is interesting to illustrate what kind of problem we can solve with  \QCTL. 

Each  configuration corresponds to a state in $\calS_J$. Every move of Player-$\bar{J}$ from a configuration $c$ to a configuration $c'$ provides a transition $(x_c,x_{c'})$ in $\calS_J$. However, a move of Player-$J$ from  $c$ to $c'$ is encoded as two transitions $x_c \fleche x_{c,c'} \fleche x_{c'}$ where $x_{c,c'}$ is  an intermediary state we   use to encode a strategy for Player-$J$ (marking $x_{c,c'}$ by an atomic proposition will correspond to Player-$J$ choosing $c'$ from  $c$). We assume that every state $x_c$ is labelled by $t_1$  if it's Player-$1$'s turn to play at $c$, and by $t_2$ otherwise. Every intermediary state $x_{c,c'}$ is labelled by $\text{int}$. 
We also label empty configurations by $w_1$ or $w_2$, depending on which player made the last move. 

Clearly, the size of $\calS$ will depend on the number of objects of each heap in the initial configuration. The formula $\Phiwin^J$ depends only on $J$:
\[
\Phiwin^J \;\eqdef\; \exists m. \Big( \AG \big(t_J \:\impl \: \EX m \big) \;\et\; \AF \big(w_J \ou (\text{int} \et \non m)\big)\Big)
\]
This formula holds true in a state corresponding to some configuration $c$ iff there  exists a labelling by $m$ such that every reachable configuration where it's Player-$J$'s turn, has a successor labelled by $m$ (thus a possible choice to do)  and every execution from the current state leads to either a winning state for Player-$J$ or a non-selected intermediary state, therefore all outcomes induced by the underlying strategy have to  verify $\F w_J$.  Note that in this example, the Kripke structure is acyclic (except the self-loops on the ending states).
There is no $\exists^1$  or $\forall^1$  operator, so  the encoding of $\mathsf{uniq}$ does not matter. Moreover the  formula is already flat, so  the methods $\MetFP$ and $\MetFPF$ are similar.

\subsection{Resources distribution}

The last example is as follows: given a Kripke structure $\calS$ and two integers $k$ and $d$, we aim at choosing at most $k$ states (called targets in the following) such that every reachable state (from the initial one) can reach a target in less than $d$ transitions. This problem can be encoded with the following \QCTL\ formula where $d$ modalities $\EX$ are nested:
\[
\Phi^{\textsf{res}}_{k,d} \;\eqdef\;  \exists^1  c_1 \ldots \exists^1 c_k . \: \AG \Big( (\!\!\OU_{1\leq i \leq k}\!\! c_i) \ou \EX \Big( 
(\!\!\OU_{1\leq i \leq k}\!\! c_i) \ou \Big( \ldots \ou \EX ( \!\!\OU_{1\leq i \leq k}\!\! c_i) \Big) \Big) \Big)
\]

For experimental  results, we consider the grid $\calK_{n,m}$ described at Figure~\ref{fig-resources} where nodes have high branching degrees. Here the interesting point  is the nesting of temporal modalities $\EX$: we will see that only the methods based on flat formulas are successful. Moreover due to the structure of the formula, the methods $\MetFBV$ and $\MetPNF$ are similar (no Until is used except the outermost $\AG$). 

\subsection{Overview of experimental results}
\label{sec-results}

\begin{table}[t]
\begin{tabular}{|l|p{2.7cm}|c|c|c|c|c|}
\hline 
 \multicolumn{2}{|l|}{problem} & size & \MetUU/\Z3 &   \MetFP/\Z3 & \MetFP/\qfm & \MetFPF/\Z3  \\
  \hline
\multicolumn{7}{|l|}{Reset property:} \\
 1 & $\calV_{10,30}\sat \Lambda_{12}$ & 301 & 76+0.7 &  38+2 & 40+7 & 42+2  \\
 2 & $\calV_{15,100}\sat \Lambda_{16}$ & 1501 & X &  X & X & X  \\
 3 & $\calV_{6,10}\not\sat \Lambda_{5}$ & 61 & \textbf{0.4+25} &  \textbf{0.5+42} & 0.5+122 & X   \\
 \hline
 \multicolumn{7}{|l|}{$k$-Connectivity:} \\
 4 & $\calS_{10,5}\sat \Psi_4$ & 200  & X & 
 \textbf{2+0.6} &   {3+2} & X   \\
 5 & $\calS_{15,5}\sat \Psi_4$  & 450 & X  &  15+4 & 16+5 & X \\
 6 & $\calS_{15,7}\sat \Psi_6$ & 450 & X &  22+112 & \textbf{25+56}& X  \\
 7 & $\calS_{30,6}\sat \Psi_4$ & 1800 & X & 188+126 & 196+100 & X \\
 8 & $\calS_{10,4}\not\sat \Psi_5$  & 200 & X  &  \textbf{3+0.7} &  \textbf{3+1} & 84+X  \\
  \hline
   \multicolumn{7}{|l|}{Nim game:} \\
   9 & $[3,4,5]\sat \Phiwin^1$ & 96  & 40+0.2 &  \textbf{0.1+0.1}
   & 0.1+X   & (\MetFP)   \\ 
  10 & $[2,3,4,4]\sat \Phiwin^1$ & 124  & 157+0.3 & \textbf{ 0.1+0.1} & 0.1+X   & (\MetFP)  \\ 
  11 & $[3,4,5,6]\sat \Phiwin^1$ &  330 & X  &  \textbf{0.2+0.2} & 0.2+X   & (\MetFP)  \\ 
  12 & $[2,4,8,14]\not\sat \Phiwin^1$ &  1556 & X &  \textbf{3+33} & 3+X   & (\MetFP)   \\ 
  \hline
  \multicolumn{7}{|l|}{Resources distribution:} \\
  13 & $\calL_{10,10}\sat \Phi^{\textsf{res}}_{8,6}  $ & 100  &  X & 
X  & X   &  \textbf{14+0.3}   \\ 
   14 & $\calL_{12,12}\sat \Phi^{\textsf{res}}_{8,6}  $ & 144   & X &  X
  & X   &  \textbf{30+0.7}   \\ 
  15 & $\calL_{12,12}\sat \Phi^{\textsf{res}}_{6,8}  $ & 144  & X &  X
  & X   &  \textbf{28+0.6}   \\ 
  16 & $\calL_{20,20}\sat \Phi^{\textsf{res}}_{6,8}  $ & 400 & X &  X
  & X   &  \textbf{153+3}   \\ 

  \hline
\end{tabular}
\caption{Overview of experimental results (1).}
\label{tabres1}
\end{table}

\begin{table}[ht]
\begin{tabular}{|l|c|c|c|c|c|}
\hline
  $\sharp$ pb &   \MetPNF/\Z3 & \MetPNF/\qfm & \MetPNF/\cqesto & \MetPNF/\qfun &  \MetFBV/\cqesto \\
  \hline
  \multicolumn{6}{|l|}{Reset property:} \\
 1 &    1+4 & \textbf{1+1} & \textbf{1+0.2}  &  \textbf{1+0.3} & 110+12 \\
  2 &   7+X &  \textbf{7+9} & \textbf{8+2} & \textbf{8+2}  &  X \\
  3 &  \textbf{0.4+25} &  0.7+94  & 0.1+156 & \textbf{0.2+30}  & 3+X  \\
 \hline
  \multicolumn{6}{|l|}{$k$-Connectivity:} \\
  4 &  0.2+5 & 0.2+46 & \textbf{0.2+0.1} & 0.2+5 & 48+X  \\
   5  &  0.4+16 & 0.4+X &  \textbf{0.4+6} & 0.4+487 & 507+X \\
   6 &  0.7+X & 0.7+X & 0.7+X & 0.7+X & X \\
   7 &   2+587 & \textbf{2+9} & \textbf{4+4.3}  & 4+X & X \\
   8  &  \textbf{0.2+2} & {0.2+10} & 0.2+19 & {0.2+8} & 131+X \\
  \hline
     \multicolumn{6}{|l|}{Nim game:} \\
    9 &  0.1+123  & \textbf{0.1+1} & \textbf{0.1+1}  & 0.1+31   & 212+28   \\ 
   10 &  0.1+49  & 0.1+18 & \textbf{0.1+1} & 0.1+X   & 317+50   \\ 
   11 &  0.3+X & 0.3+X  & 0.3+X & 0.3+X   &  X   \\ 
   12 &  4+X & 4+X & 4+X & 4+X   & X   \\ 
  \hline
    \multicolumn{6}{|l|}{Resources distribution:} \\
   13 & \textbf{16+0.4} & \textbf{17+0.3} & \textbf{18+0.2}     & \textbf{17+0.2}   &  (\MetPNF)  \\ 
    14 &  \textbf{34+0.6}  & \textbf{35+0.6}  & \textbf{35+0.3}
 &  \textbf{36+0.4}  &  (\MetPNF)   \\ 
   15 &  \textbf{39+0.7}  & \textbf{35+0.5}  & \textbf{39+0.3}
 &   \textbf{35+0.3}  &  (\MetPNF)   \\ 
     16 &  \textbf{207+3} & \textbf{190+3} & \textbf{180+2} 
  & \textbf{179+3}   &   (\MetPNF) \\ 
  \hline
\end{tabular}
\label{tabres2}
\caption{Overview of experimental results (2).}
\end{table}

The main experimental results are given in Tables~\ref{tabres1} and~4. 
We distinguish the time to build the \QBF \ formula and the solver's part (a timeout is set to 600 seconds). For example, for  $\calV_{10,30}\sat \Lambda_{12}$, the result is $38+2$, it means that the construction of the \QBF \ formula took 38s and the \QBF\ solver needed 2s to decide whether the formula was valid or not. A result of the form $7+X$ means that the  formula was built in $7s$ but the solver didn't give the result in less than 600s. And a result of the form $X$ means that the \QBF\ formula was not yet built after 600s. Moreover $(A)$ means that this reduction is similar to the reduction $A$ (\ie\ the two preprocessing steps provide the same formula). Best results  for a case are in bold.

The first remark is that we can  successfully apply our techniques and get answers for \QCTL\ verification problems. In the tables, we just  selected the most significant results.  The encoding of $\exists^1$ and $\forall^1$ is always done with bit vectors: indeed the two other encodings presented in section~\ref{sec-ea1} are never the best solution on the examples we consider.

The results show that the two most interesting methods seem to be the \MetFP\ reduction (associated with \Z3) and the \MetPNF\ reduction (associated with \cqesto). As soon as the formula contains nested temporal modalities the \MetPNF \ reduction is better. For the Nim game, the reduction \MetFP\ is much more efficient than any other reduction and give the solution for structures with more than 1500 states. For the resource distribution (with a high branching degree of the structure and a high temporal nesting in the formula), the flattening is mandatory (\MetPNF\ or \MetFPF). 

The results also depend  on the choice of the solver: a \QBF\ formula  can be solved very easily by a solver but requires a very long execution time for others.

\section{Conclusion}

We have presented several reductions from \QCTL\ model-checking to \QBF. This provides a first tool for \QCTL\ model-checking with the structure semantics. These first results are rather interesting and encouraging. 
We have seen the importance of writing "good"  \QCTL\ formulas for which the solver will be able to provide a result (this problem already exists for classical temporal logics, but it is more significant here due to the complexity induced by the quantifications). 
The examples also show that there is no "one best strategy" and no "one best solver": the best choices depend on the structure of the considered formula, the structure of the model. Two reductions seem to be the most interesting: \MetFP\ for simple \QCTL\ formulas (with few nesting of temporal modalities) and \MetPNF\ for more complex formulas. Considering several solvers is  an important point: the heuristics used by the solvers may affect significantly the execution times.  
In the future, we plan to continue to work on reduction strategies for \QCTL. Considering special methods for timed modalities of the form $\EF_{< d} \: \vfi$ (\ie\ "a state satisfying $\vfi$ is reachable in less than $d$ transitions") would be very useful in practice and  the reduction  \MetFBV \ could be useful for it. Considering other solvers (based on prenex CNF formulas) would be also an interesting work. 
Finally  we also plan to consider other logics for which such  reductions to \QBF~ are possible to get decision procedures.

\paragraph{Acknowledgement} 
We would like to strongly thank Mikol{\'{a}}s Janota for his help and advice about the \QBF\ solvers, as well as Yann Regis-Gianas who helped us with the experiments part, and the anonymous reviewers for their helpful suggestions to improve the paper.

\bibliography{bibic}

\begin{thebibliography}{10}

\bibitem{AFH15}
Carlos Areces, Raul Fervari, and Guillaume Hoffmann.
\newblock Relation-changing modal operators.
\newblock {\em Logic Journal of the {IGPL}}, 23(4):601--627, 2015.

\bibitem{ABG18}
Guillaume Aucher, Johan van Benthem, and Davide Grossi.
\newblock Modal logics of sabotage revisited.
\newblock {\em J. Log. Comput.}, 28(2):269--303, 2018.

\bibitem{Biere-SAT-Race-2019-solvers}
Armin Biere.
\newblock {CaDiCaL} at the {SAT} {Race} 2019.
\newblock In Marijn Heule, Matti J{\"a}rvisalo, and Martin Suda, editors, {\em
  Proc.~of {SAT Race} 2019 -- Solver and Benchmark Descriptions}, volume
  B-2019-1 of {\em Department of Computer Science Series of Publications B},
  pages 8--9. University of Helsinki, 2019.

\bibitem{BiereCCSZ03}
Armin Biere, Alessandro Cimatti, Edmund~M. Clarke, Ofer Strichman, and Yunshan
  Zhu.
\newblock Bounded model checking.
\newblock {\em Advances in Computers}, 58:117--148, 2003.

\bibitem{BiereCCZ99}
Armin Biere, Alessandro Cimatti, Edmund~M. Clarke, and Yunshan Zhu.
\newblock Symbolic model checking without bdds.
\newblock In {\em Tools and Algorithms for Construction and Analysis of
  Systems, 5th International Conference, {TACAS} '99, Amsterdam, The
  Netherlands, March 22-28, 1999, Proceedings}, volume 1579 of {\em Lecture
  Notes in Computer Science}, pages 193--207. Springer, 1999.

\bibitem{BondyM08}
J.~Adrian Bondy and Uppaluri S.~R. Murty.
\newblock {\em Graph Theory}.
\newblock Graduate Texts in Mathematics. Springer, 2008.

\bibitem{CE82}
Edmund~M. Clarke and E.~Allen Emerson.
\newblock Design and synthesis of synchronization skeletons using
  branching-time temporal logic.
\newblock In Dexter~C. Kozen, editor, {\em {P}roceedings of the 3rd {W}orkshop
  on {L}ogics of {P}rograms ({LOP}'81)}, volume 131 of {\em Lecture Notes in
  Computer Science}, pages 52--71. Springer-Verlag, 1982.

\bibitem{qfm}
Simon Cooksey, Sarah Harris, Mark Batty, Radu Grigore, and Mikolas Janota.
\newblock Pridemm: Second order model checking for memory consistency models.
\newblock In {\em Pre-proceedings of TAPAS 2019, 10th Workshop on Tools for
  Automatic Program Analysis}, pages 7--26, 2019.

\bibitem{DLM12}
Arnaud Da{~}Costa, Fran{\c c}ois Laroussinie, and Nicolas Markey.
\newblock Quantified~{CTL}: Expressiveness and model checking.
\newblock In Maciej Koutny and Irek Ulidowski, editors, {\em {P}roceedings of
  the 23rd {I}nternational {C}onference on {C}oncurrency {T}heory
  ({CONCUR}'12)}, volume 7454 of {\em Lecture Notes in Computer Science}, pages
  177--192. Springer-Verlag, September 2012.

\bibitem{MouraB08}
Leonardo~Mendon{\c{c}}a de~Moura and Nikolaj Bj{\o}rner.
\newblock {Z3:} an efficient {SMT} solver.
\newblock In {\em Tools and Algorithms for the Construction and Analysis of
  Systems, 14th International Conference, {TACAS} 2008, Budapest, Hungary,
  2008. Proceedings}, volume 4963 of {\em Lecture Notes in Computer Science},
  pages 337--340. Springer, 2008.

\bibitem{DershowitzHK05}
Nachum Dershowitz, Ziyad Hanna, and Jacob Katz.
\newblock Bounded model checking with {QBF}.
\newblock In {\em Theory and Applications of Satisfiability Testing, 8th
  International Conference, {SAT} 2005, St. Andrews, UK, June 19-23, 2005,
  Proceedings}, volume 3569 of {\em Lecture Notes in Computer Science}, pages
  408--414. Springer, 2005.

\bibitem{EenS03}
Niklas E{\'{e}}n and Niklas S{\"{o}}rensson.
\newblock An extensible sat-solver.
\newblock In {\em Theory and Applications of Satisfiability Testing, 6th
  International Conference, {SAT} 2003. Santa Margherita Ligure, Italy, May
  5-8, 2003 Selected Revised Papers}, volume 2919 of {\em Lecture Notes in
  Computer Science}, pages 502--518. Springer, 2003.

\bibitem{Fre01}
Tim French.
\newblock Decidability of quantified propositional branching time logics.
\newblock In Markus Stumptner, Dan Corbett, and Mike Brooks, editors, {\em
  {P}roceedings of the 14th {A}ustralian {J}oint {C}onference on {A}rtificial
  {I}ntelligence ({AJCAI}'01)}, volume 2256 of {\em Lecture Notes in Computer
  Science}, pages 165--176. Springer-Verlag, December 2001.

\bibitem{Fre03}
Tim French.
\newblock Quantified propositional temporal logic with repeating states.
\newblock In {\em {P}roceedings of the 10th {I}nternational {S}ymposium on
  {T}emporal {R}epresentation and {R}easoning and of the 4th {I}nternational
  {C}onference on {T}emporal {L}ogic ({TIME}-{ICTL}'03)}, pages 155--165. IEEE
  Comp. Soc. Press, July 2003.

\bibitem{cqesto}
Mikol{\'{a}}s Janota.
\newblock Circuit-based search space pruning in {QBF}.
\newblock In {\em Theory and Applications of Satisfiability Testing - {SAT}
  2018 - 21st International Conference, {SAT} 2018, Oxford, UK, July 9-12,
  2018, Proceedings}, volume 10929 of {\em Lecture Notes in Computer Science},
  pages 187--198. Springer, 2018.

\bibitem{Kup95a}
Orna Kupferman.
\newblock Augmenting branching temporal logics with existential quantification
  over atomic propositions.
\newblock In Pierre Wolper, editor, {\em {P}roceedings of the 7th
  {I}nternational {C}onference on {C}omputer {A}ided {V}erification
  ({CAV}'95)}, volume 939 of {\em Lecture Notes in Computer Science}, pages
  325--338. Springer-Verlag, July 1995.

\bibitem{LM14}
Fran{\c{c}}ois Laroussinie and Nicolas Markey.
\newblock Quantified {CTL:} expressiveness and complexity.
\newblock {\em Logical Methods in Computer Science}, 10(4), 2014.

\bibitem{LaroussinieM15}
Fran{\c{c}}ois Laroussinie and Nicolas Markey.
\newblock Augmenting {ATL} with strategy contexts.
\newblock {\em Inf. Comput.}, 245:98--123, 2015.

\bibitem{LMS15}
Fran{\c{c}}ois Laroussinie, Nicolas Markey, and Arnaud Sangnier.
\newblock Atlsc with partial observation.
\newblock In {\em Proceedings Sixth International Symposium on Games, Automata,
  Logics and Formal Verification, GandALF 2015, Genoa, Italy, 21-22nd September
  2015}, volume 193 of {\em {EPTCS}}, pages 43--57, 2015.

\bibitem{LR03}
Christof L{\"{o}}ding and Philipp Rohde.
\newblock Model checking and satisfiability for sabotage modal logic.
\newblock In {\em {FST} {TCS} 2003: Foundations of Software Technology and
  Theoretical Computer Science, 23rd Conference, Mumbai, India, December 15-17,
  2003, Proceedings}, volume 2914 of {\em Lecture Notes in Computer Science},
  pages 302--313. Springer, 2003.

\bibitem{LR03b}
Christof L{\"{o}}ding and Philipp Rohde.
\newblock Solving the sabotage game is pspace-hard.
\newblock In {\em Mathematical Foundations of Computer Science 2003, 28th
  International Symposium, {MFCS} 2003, Bratislava, Slovakia, August 25-29,
  2003, Proceedings}, volume 2747 of {\em Lecture Notes in Computer Science},
  pages 531--540. Springer, 2003.

\bibitem{McMillan02}
Kenneth~L. McMillan.
\newblock Applying {SAT} methods in unbounded symbolic model checking.
\newblock In {\em Computer Aided Verification, 14th International Conference,
  {CAV} 2002,Copenhagen, Denmark, July 27-31, 2002, Proceedings}, volume 2404
  of {\em Lecture Notes in Computer Science}, pages 250--264. Springer, 2002.

\bibitem{PBDDC02}
Anindya~C. Patthak, Indrajit Bhattacharya, Anirban Dasgupta, Pallab Dasgupta,
  and P.~P. Chakrabarti.
\newblock Quantified computation tree logic.
\newblock {\em Information Processing Letters}, 82(3):123--129, 2002.

\bibitem{Pnu77}
Amir Pnueli.
\newblock The temporal logic of programs.
\newblock In {\em {P}roceedings of the 18th {A}nnual {S}ymposium on
  {F}oundations of {C}omputer {S}cience ({FOCS}'77)}, pages 46--57. IEEE Comp.
  Soc. Press, October-November 1977.

\bibitem{QS82a}
Jean-Pierre Queille and Joseph Sifakis.
\newblock Specification and verification of concurrent systems in {CESAR}.
\newblock In Mariangiola Dezani{-}Ciancaglini and Ugo Montanari, editors, {\em
  {P}roceedings of the 5th {I}nternational {S}ymposium on {P}rogramming
  ({SOP}'82)}, volume 137 of {\em Lecture Notes in Computer Science}, pages
  337--351. Springer-Verlag, April 1982.

\bibitem{qfun}
Ricardo~Joel Silva and Mikolas Janota.
\newblock Machine learning of strategies in qbf solving.
\newblock In {\em Pre-proceedings of 26th RCRA International Workshop on
  "Experimental Evaluation of Algorithms for Solving Problems with
  Combinatorial Explosion"}, 2019.

\bibitem{sis83}
A.~Prasad Sistla.
\newblock {\em Theoretical Issues in the Design and Verification of Distributed
  Systems}.
\newblock PhD thesis, Harvard University, Cambridge, Massachussets, USA, 1983.

\bibitem{stirling2001}
Colin Stirling.
\newblock {\em Modal and Temporal Properties of Processes}.
\newblock Springer (Texts in Computer Science), 2001.

\bibitem{Benthem05}
Johan van Benthem.
\newblock An essay on sabotage and obstruction.
\newblock In {\em Mechanizing Mathematical Reasoning, Essays in Honor of
  J{\"{o}}rg H. Siekmann on the Occasion of His 60th Birthday}, volume 2605 of
  {\em Lecture Notes in Computer Science}, pages 268--276. Springer, 2005.

\end{thebibliography}

\appendix 


\section{Proof of Theorem~\ref{theo-UU}}

\label{app-uu}

\begin{proof}
We assume $\Phi$ to be fixed and we prove the property by structural induction over the subformula $\vfi$. Boolean operators are omitted. 
\begin{itemize}
\item $\vfi = p$: if  $\calK,x\sat_\varepsilon p$, then either $p$ is a quantified proposition and  $x$ belongs to $\epsilon(p)$ and thus $v_\varepsilon \sat p^x$ by definition of $v_\varepsilon$, or $p$ belongs to $\ell(x)$. In  both cases we have $v_\varepsilon \sat \widehat{p}^{x,\dom(\varepsilon)}$. The converse is similar. 

\item $\vfi=\exists p. \psi$. We have $\calK,x\sat_\varepsilon \exists p. \psi$ iff there exists  $V'\subseteq V$ s.t.\ $\calK,x \sat_{\varepsilon[p\mapsto V']} \psi$, iff (by i.h.) there exists $V'\subseteq V$ s.t.\ $v_{\varepsilon[p\mapsto V']} \sat \widehat{\psi}^{x,\dom(\varepsilon)\cup\{p\}}$ which is equivalent to $v_{\varepsilon} \sat \exists p^{v_1}\ldots p^{v_n}.$ $\widehat{\psi}^{x,\dom(\varepsilon)\cup\{p\}}$ (by definition of    $v_\varepsilon$). 

\item $\vfi=\EX \psi$:   $\calK,x\sat_\varepsilon \EX \psi$ iff there exists $(x,x')\in E$ s.t.\ $\calK,x ' \sat_\varepsilon \psi$, iff (by i.h.)  there exists $(x,x')\in E$ s.t.\  $v_\varepsilon \sat \widehat{\psi}^{x',\dom(\varepsilon)}$ which is equivalent to $v_\varepsilon \sat \widehat{\EX \psi}^{x,\dom(\varepsilon)}$. 

\item $\vfi=\AX \psi$: Similar to $\EX$ with a conjunction to ensure that \emph{all} successors satisfy $\psi$. 
\item $\vfi=\EF \psi$ or $\vfi=\AG \psi$: Similar to $\EX$ or $\AX$ except that we consider any reachable state $x'$ instead of immediate successors (thus we use the reflexive and transitive closure $E^*$ of $E$). 

\item $\vfi=\Ex \psi_1 \Until \psi_2$: The definition of $\widehat{\vfi}^{x,\dom(\varepsilon)}$ corresponds to a finite unfolding of the expansion law that characterizes the $\EU$\ modality. Assume  $\calK,x\sat_\varepsilon \vfi$. There exists a  path $\rho \in 
\Path^\omega_\calK(x)$ and a position $i\geq 0$ s.t.\ $\rho(i) \sat_\varepsilon  \psi_2$ and $\rho(k) \sat_\varepsilon  \psi_1$ for any $0 \leq k < i$. The finite prefix $x=\rho(0)\cdots\rho(k)$ can be assumed to be simple, and then $k<|V|$. By using i.h., we get 
$v_\varepsilon \sat \widehat{\psi_2}^{\rho(i),\dom(\varepsilon)}$, and $v_\varepsilon \sat \widehat{\psi_1}^{\rho(k),\dom(\varepsilon)}$ for any $0\leq k <i$. From this point, the reader can easily verify by induction (starting at $i$, down to 0) that $v_\varepsilon\sat\overline{\Ex \psi_1\Until\psi_2}^{\rho(k), dom(\varepsilon), \{\rho(j)|j\leq k\}}$ for all $k\leq i$. This makes $\widehat{\vfi}^{x,\dom(\varepsilon)}$ to be satisfied by $v_\varepsilon$.

Conversely, assume $v_\varepsilon \sat \widehat{\vfi}^{x,\dom(\varepsilon)}$, \ie\ 
$v_\varepsilon \sat \overline{\Ex \psi_1 \Until \psi_2}^{x,\dom(\varepsilon),\{x\}}$. Given the definition, there exists a sequence of states $x_0, \ldots, x_i$ s.t.\ (1) $x_0=x$, (2) $v_\varepsilon \sat  \widehat{\psi_2}^{x_i,\dom(\varepsilon)}$ and (3) for any $0 \leq j <i$ we have $v_\varepsilon \sat \widehat{\psi_1}^{x_j,\dom(\varepsilon)}$, $(x_j,x_{j+1})\in E$ and $x_{j+1} \not\in \{x_0,\ldots,x_j\}$. And by i.h., we can deduce that $x_0 \ldots x_i$ is a path in $\calK$ satisfying $\psi_1 \Until \psi_2$.

\item $\vfi=\All \psi_1 \Until \psi_2$: this case is similar to the previous one, except that we have to consider loops.
Assume $\calK,x\sat_\varepsilon \vfi$. Then any path issued from $x$  satisfies $\psi_1 \Until \psi_2$:  
either there is a simple prefix witnessing $\psi_1 \Until \psi_2$ (and ending with a state satisfying $\psi_2$), or there is a loop from some point. In the   latter case, one of the state in the loop has to verify $\psi_2$. In both cases, the definition of   $\widehat{\vfi}^{x,\dom(\varepsilon)}$ gives the result. 

\end{itemize}
\hfill$\Box$
\end{proof}

\section{Proof of Proposition~\ref{prop-flat2}}
\label{app-meth5}

\begin{proof}
Consider w.l.o.g.\ a \QCTL\ formula   $\Phi$ in prenex normal form and NNF. We can define $\Phi_0$ and the basic formulas   $\theta_i$s    as in Proposition~\ref{prop-ctl-flat}, except that $\Phi_0$ contains only $\EX$,  $\AX$, $\EF$ or $\AG$ modalities, and every other modality is associated with  some quantified proposition $\kappa$ and a subformula $\AG(\ldots)$ in the main conjunction of $\Psi$. Every $\theta_i$ starts with  
a modality in  $S_{tmod}$. Let $\vfi_i$ be the original $\Phi$-subformula   associated with $\theta_i$. 
Note that   $\Psi$ is in NNF, and  $\kappa_i$ occurs only once in $\Psi$ in the scope of a negation, and it happens in the subformula $\AG(\kappa_i \impl \theta_i)$. We now have to show that $\Phi$ is equivalent to $\Psi$. Consider the formula $\widetilde{\Psi}$ where every $\impl$ is replaced by $\equivaut$: by following the same arguments of Proposition~\ref{prop-ctl-flat}, we clearly have $\Phi \equiv \widetilde{\Psi}$, and $\widetilde{\Psi} \impl \Psi$. It remains to prove the opposite direction.

To prove $\Psi \impl  \widetilde{\Psi}$,  it is sufficient to show that this is true for the empty $\calQ$ (as equivalence is substitutive). 
Assume $\calK, x \sat_\varepsilon \Psi$. Then there exists an environment $\varepsilon'$  from  $\{\kappa_1,\ldots,\kappa_m\}$ to $2^V$ such that ${\displaystyle \calK, x \sat_{\varepsilon\compo\varepsilon'} \Phi_0 \et \ET_i \AG (\kappa_i \impl \theta_i)}$.
Now we have:
\[
\forall i,\quad \calK, x  \sat_{\varepsilon \compo\varepsilon'} \theta_i \;\;\impl\;\; \calK, x  \sat_{\varepsilon}  \vfi_i  \quad \mbox{and} \quad \calK, x  \sat_{\varepsilon \compo\varepsilon'}  \Phi_0 \;\;\impl\;\; \calK, x  \sat_{\varepsilon\compo\varepsilon'}  \widetilde{\Phi_0}
\]
Indeed, assume that it is not true and $\calK, x  \sat_{\varepsilon \compo\varepsilon'} \theta_i$ and  $\calK, x  \not\sat_{\varepsilon}  \vfi_i$.  Consider such a  formula $\vfi_i$ with the smallest temporal height. The only atomic propositions $\kappa_j$ occurring in $\theta_i$ are then associated with some $\theta_j$ and $\vfi_j$ which verify the property and thus any state satisfying such a $\theta_j$, also satisfies $\vfi_j$. Therefore any state labelled by such a $\kappa_j$ is correctly labelled (and satisfies $\vfi_j$). And the states that are not labelled by  $\kappa_j$ cannot make $\theta_i$ to be wrongly evaluated to true (because $\kappa_j$ is not in the scope of a negation). Therefore $\vfi_i$ holds true at $x$. The same holds for $\Phi_0$ and $\widetilde{\Phi_0}$.
As a direct consequence, we have  $\Psi \impl  \widetilde{\Psi}$.
\hfill$\Box$
\end{proof}

\section{Proposition for Section~\ref{sec-ea1}}
\label{app-ea1}

NB: we assume that every quantifier introduces a fresh atomic proposition. 

\begin{proposition}
\label{prop-ea1}
Let $\calQ$, $\calQ'$ and $\calQ''$ be three blocks of quantifiers. We have the following equivalence:
\begin{multline}
\calQ \:\exists\: p_1 \:\calQ'\: \forall p_2 \:\calQ'' . \Big( \uniq(p_1) \:\et\: \big( \uniq(p_2) \:\impl\: \Phi \big)\Big) \quad \equiv \\ 
\calQ \:\exists\: p_1 \:\calQ'\: \forall p_2 \:\calQ'' . \Big( \uniq(p_2) \:\impl\: \big( \uniq(p_1) \:\et\: \Phi \big)\Big)
\end{multline}

\end{proposition}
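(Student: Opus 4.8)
The plan is to exploit that $\uniq(p)=\Ex_{=1}\F p$ is, by the freshness convention, a formula whose truth value at $x$ depends only on the $p$-labelling of $\calK$, so once $p_1$ (resp.\ $p_2$) is assigned a set of vertices it keeps a constant truth value throughout the remaining quantifier scope; and that $\uniq(p)$ is always satisfiable at $x$, since labelling exactly the singleton $\{x\}$ with $p$ makes it hold ($x$ is reachable from itself). First I would use substitutivity of $\equiv$ to drop the outermost block $\calQ$, reducing the claim to the equivalence of $\exists p_1\,\calQ'\,\forall p_2\,\calQ''.\,\psi_1$ and $\exists p_1\,\calQ'\,\forall p_2\,\calQ''.\,\psi_2$, where $\psi_1 \eqdef \uniq(p_1) \et (\uniq(p_2)\impl\Phi)$ and $\psi_2 \eqdef \uniq(p_2)\impl(\uniq(p_1)\et\Phi)$.

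For the left-to-right direction I would observe that $\psi_1 \impl \psi_2$ is a propositional tautology ($a\et(b\impl c)\impl(b\impl(a\et c))$ with $a=\uniq(p_1)$, $b=\uniq(p_2)$, $c=\Phi$), and then use the standard fact that prefixing a fixed quantifier block is monotone with respect to logical implication of the matrix: from $\psi_1\impl\psi_2$ we get $\calQ'\,\forall p_2\,\calQ''.\,\psi_1 \impl \calQ'\,\forall p_2\,\calQ''.\,\psi_2$ under any value of $p_1$, hence the corresponding implication after prefixing $\exists p_1$.

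For the converse I would argue as follows. Assume $\calK,x\sat_\varepsilon \exists p_1\,\calQ'\,\forall p_2\,\calQ''.\,\psi_2$, witnessed by a labelling $V_1$ of $p_1$. The key claim is that $\uniq(p_1)$ must already hold at $x$ under $p_1\mapsto V_1$. Indeed, otherwise $\uniq(p_1)\et\Phi$ is false throughout the scope (constancy of $\uniq(p_1)$), so there $\psi_2$ has the truth value of $\non\uniq(p_1)\ldots$ more precisely of $\non\uniq(p_2)$; since neither $\calQ'$ nor $\calQ''$ quantifies $p_2$, one may commute/eliminate those blocks and conclude that $\forall p_2.\,\non\uniq(p_2)$ holds at $x$, contradicting satisfiability of $\uniq(p_2)$ (take $p_2\mapsto\{x\}$). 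Hence $\uniq(p_1)$ is true and constant under $p_1\mapsto V_1$, so in that scope $\psi_1$ simplifies to $\uniq(p_2)\impl\Phi$ and $\psi_2$ simplifies to $\uniq(p_2)\impl(\top\et\Phi)$, i.e.\ the two matrices have the same truth value everywhere; therefore $\calK,x\sat_{\varepsilon[p_1\mapsto V_1]}\calQ'\,\forall p_2\,\calQ''.\,\psi_1$, and the same $V_1$ witnesses the left-hand side.

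The step I expect to be the main obstacle is the careful bookkeeping inside this claim: spelling out that, because every quantifier introduces a fresh proposition, $\uniq(p_1)$ is immune to all choices made for the propositions of $\calQ',\calQ''$ and for $p_2$, while $\uniq(p_2)$ is immune to those of $\calQ',\calQ''$ and to $p_1$. Once this independence is made precise the two reductions ($\psi_2$ collapsing to $\non\uniq(p_2)$ on one hand, $\psi_1\equiv\psi_2$ on the other) are routine, and everything else is purely formal manipulation of quantifier prefixes.
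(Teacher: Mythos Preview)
Your proof is correct and follows essentially the same approach as the paper: the left-to-right implication is immediate (you phrase it as a propositional tautology plus monotonicity of a quantifier prefix, the paper does it by a direct semantic argument), and the right-to-left direction hinges in both cases on the satisfiability of $\uniq(p_2)$ via the labelling $p_2\mapsto\{x\}$, which forces $\uniq(p_1)$ to hold for the chosen $p_1$-witness. The only noteworthy difference is that you are more explicit about handling the inner blocks $\calQ'$ and $\calQ''$ (using freshness to argue that $\uniq(p_1)$ and $\uniq(p_2)$ are constant across those quantifications), whereas the paper's proof essentially elides them.
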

\begin{proof}
\begin{itemize}
\item $(1)\impl(2)$: there exists a labelling for $p_1$ s.t.\ $\uniq(p_1)$ is true and for any $p_2$-labelling, if $\uniq(p_2)$ holds true, then  we have $\Phi$. Now choose the same  $p_1$-labelling to evaluate the right-hand side formula, we know that $\uniq(p_1)$ is true, and moreover for every $p_2$-labelling satisfying $\uniq(p_2)$, $\phi$ is satisfied. This provides the result. 
\item $(2)\impl(1)$: Choose the $p_1$-labelling. We know that there exists a $p_2$-labelling such that $\uniq(p_2)$ is true (\emph{e.g.} by labelling only the current state with $p_2$), this implies that we have $\uniq(p_1)$ and $\Phi$. And therefore, for every $p_2$-labelling satisfying $\uniq(p_2)$, we have $\Phi$.
\end{itemize}
\hfill$\Box$
\end{proof}

\end{document}